%%
%% This is file `sample-sigconf.tex',
%% generated with the docstrip utility.
%%
%% The original source files were:
%%
%% samples.dtx  (with options: `sigconf')
%%
%% IMPORTANT NOTICE:
%%
%% For the copyright see the source file.
%%
%% Any modified versions of this file must be renamed
%% with new filenames distinct from sample-sigconf.tex.
%%
%% For distribution of the original source see the terms
%% for copying and modification in the file samples.dtx.
%%
%% This generated file may be distributed as long as the
%% original source files, as listed above, are part of the
%% same distribution. (The sources need not necessarily be
%% in the same archive or directory.)
%%
%%
%% Commands for TeXCount
%TC:macro \cite [option:text,text]
%TC:macro \citep [option:text,text]
%TC:macro \citet [option:text,text]
%TC:envir table 0 1
%TC:envir table* 0 1
%TC:envir tabular [ignore] word
%TC:envir displaymath 0 word
%TC:envir math 0 word
%TC:envir comment 0 0
%%
%%
%% The first command in your LaTeX source must be the \documentclass
%% command.
%%
%% For submission and review of your manuscript please change the
%% command to \documentclass[manuscript, screen, review]{acmart}.
%%
%% When submitting camera ready or to TAPS, please change the command
%% to \documentclass[sigconf]{acmart} or whichever template is required
%% for your publication.
%%
%%
\documentclass[twoside,leqno,twocolumn]{article}
\usepackage[letterpaper]{geometry}

\usepackage{ltexpprt}

\usepackage{etoolbox}
\usepackage{graphicx}  % pictures and figures
\usepackage{lipsum}  % random paragraphs
\usepackage{xspace}
\usepackage{textcomp}
\usepackage{comment} % \begin{comment} ... \end{comment}
\usepackage{verbatim}
\usepackage[usenames,dvipsnames,svgnames,table,x11names]{xcolor}
\usepackage[numbers,sort,compress]{natbib}
\usepackage[colorlinks,
citecolor=Sepia,
linkcolor=Blue,
pagebackref=true
%pagebackref=true      turn this on and off for citation back references.
]{hyperref}%         % for online version
\usepackage{latexsym,amsmath,amsfonts,amssymb,stmaryrd,mathtools}
\usepackage{pifont}
\usepackage[shortlabels]{enumitem}
\usepackage{float}
\usepackage[labelfont=bf,font={small},aboveskip=0em, belowskip=0em]{caption}
\usepackage[labelfont=bf,list=true,skip=0em]{subcaption}
\usepackage[rightcaption]{sidecap}
\usepackage{wrapfig}
\usepackage{array}% for extended column definitions
\usepackage{rotating}

\usepackage{booktabs} % provides toprule, bottomrule, midrule, cmidrule, etc.
\usepackage{multirow}
\usepackage{longtable} % provides long table
\usepackage{supertabular} % similar to long table, allowing tables to take more than one page
\usepackage{colortbl}
\usepackage{bigstrut}
\usepackage{titlesec}

\usepackage[ruled,lined,linesnumbered,noend]{algorithm2e}
\usepackage{nameref,cleveref}
\usepackage{scalerel} % stretch a formula
\usepackage{mdframed}
\usepackage{framed}
\usepackage{listings}
\usepackage{tikz} % draw geometric objects
\usepackage{titlecaps} 
\usepackage{xparse} % For advanced command definitions
% Start page numbering from the second page
\usepackage{fancyhdr}
\pagestyle{fancy}
\fancyhf{}
\fancyfoot[C]{\thepage}

% Suppress page number on the first page
\usepackage{etoolbox}
\makeatletter
\patchcmd{\@maketitle}
  {\if@twocolumn\ifnum \c@page>1 \@thanks \fi\else \ifnum \c@page>0 \@thanks \fi\fi}
  {\if@twocolumn\ifnum \c@page>1 \@thanks \fi\else \ifnum \c@page>1 \@thanks \fi\fi}
  {}{}
\makeatother

\begin{document}

% \settopmatter{printfolios=true,printccs=false,printacmref=false}

%%%%%%%%%%%%%%%%%%%%%%%%%%%%%%%%%%%%%%%%%%%%
%% macro for conf and full versions
%%%%%%%%%%%%%%%%%%%%%%%%%%%%%%%%%%%%%%%%%%%%
% Define toggles
\newtoggle{conference}
\newtoggle{fullversion}
% Uncomment the appropriate line to set the toggle
% \toggletrue{conference}
\toggletrue{fullversion}

% New command definitions using toggles
\newcommand{\ifconference}[1]{\iftoggle{conference}{#1}{}}
\newcommand{\iffullversion}[1]{\iftoggle{fullversion}{#1}{}}

%%%%%%%%%%%%%%%%%%%%%%%%%%%%%%%%%%%%%%%%%%%%
%% Various Useful Packages
%%%%%%%%%%%%%%%%%%%%%%%%%%%%%%%%%%%%%%%%%%%%
% \usepackage{graphicx}  % pictures and figures
% \usepackage{lipsum}  % random paragraphs
\newcommand{\hide}[1]{} % hide
\newcommand{\mtext}[1]{{\mbox{{#1}}}} % text in math mode
\newcommand{\mtextit}[1]{{\mbox{\emph{#1}}}} % text in math mode in it
\newcommand{\mtextsc}[1]{{\mbox{\sc{#1}}}} % text in math mode in sc
\newcommand{\smtext}[1]{{\mbox{\scriptsize{#1}}}} % small text in math mode
\newcommand{\smtextit}[1]{{\mbox{\scriptsize\emph{#1}}}} % small text in math mode in it
\newcommand{\algname}[1]{{\textsc{#1}}} % algorithm name format
\newcommand{\defn}[1]{\emph{\textbf{#1}}} % definition style
\newcommand{\emp}[1]{\textbf{#1}} % highlight 
\newcommand{\fname}[1]{\textsc{#1}} % function name format
\newcommand{\vname}[1]{\mathit{#1}} % variable name format
\newcommand{\textcode}[1]{{\texttt{#1}}} % code format in text
\newcommand{\mathfunc}[1]{\mathit{#1}}
\newcommand{\nodecircle}[1]{{\textcircled{\footnotesize{#1}}}} % text in a circle

\newcommand{\R}{\mathbb{R}}
\newcommand{\N}{\mathbb{N}}
\newcommand{\Z}{\mathbb{Z}}
%    Some letters:
\newcommand{\mathd}{\mathcal{D}\xspace}
\newcommand{\mathq}{\mathcal{Q}\xspace}
\newcommand{\mathc}{\mathcal{C}\xspace}

\newcommand{\whp}[1]{\emph{whp}}

\newcommand{\true}{\emph{true}}
\newcommand{\false}{\emph{false}}
\newcommand{\True}{\textsc{True}\xspace}
\newcommand{\False}{\textsc{False}\xspace}
\newcommand{\zero}{\textbf{zero}}
\newcommand{\one}{\textbf{one}}

\newcommand{\cmark}{\ding{51}} % checkmark
\newcommand{\xmark}{\ding{55}} % crossmark

% Binary Forking / other cost models:
\newcommand{\modelop}[1]{\texttt{#1}}
\newcommand{\forkins}{\modelop{fork}}
\newcommand{\insend}{\modelop{end}}
\newcommand{\allocateins}{\modelop{allocate}}
\newcommand{\freeins}{\modelop{free}}
\newcommand{\thread}{thread}
\newcommand{\bfmodel}{binary-forking model}
\newcommand{\MPC}[0]{\ensuremath{\mathsf{MPC}}}
\newcommand{\spanterm}{span}
\newcommand{\paradepth}{span}
\newcommand{\TAS}[0]{$\mf{TestAndSet}$}
\newcommand{\TS}[0]{$\texttt{TS}$}
\newcommand{\insformat}[1]{\texttt{#1}}
\newcommand{\tas}{{\insformat{test\_and\_set}}}
\newcommand{\faa}{{\insformat{fetch\_and\_add}}}
\newcommand{\FAA}{{\insformat{FAA}}}
\newcommand{\cas}{{\insformat{compare\_and\_swap}}}
\newcommand{\CAS}{{\insformat{CAS}}}
\newcommand{\atinc}{{\insformat{atomic\_inc}}}
\newcommand{\WriteMin}{\insformat{write\_min}\xspace}
\newcommand{\WriteMax}{\insformat{write_max}\xspace}
\newcommand{\writemax}{\WriteMax}
\newcommand{\writemin}{\WriteMin}

%%%%%%%%%%%%%%%%%%%%%%%%%%%%%%%%%%%%%%%%%%%%
%% List: Enumerate / Itemize
%%%%%%%%%%%%%%%%%%%%%%%%%%%%%%%%%%%%%%%%%%%%
% \usepackage[shortlabels]{enumitem}

% Spacing for lists. This can also be set for each list separately by using [...]
\setlist{topsep=0.3em,itemsep=0.2em,parsep=0.1em,leftmargin=*}
% add "wide" to remove in-item indents
%\setlist{topsep=0.3em,itemsep=0.2em,parsep=0.1em,leftmargin=*,wide}
% "nosep" removes all vertical spacing
%\setlist{nosep,wide}

%%%%%%%%%%%%%%%%%%%%%%%%%%%%%%%%%%%%%%%%%%%%
%% Floating: Figure / Table / Algorithm
%%%%%%%%%%%%%%%%%%%%%%%%%%%%%%%%%%%%%%%%%%%%
% \usepackage{float}
% \usepackage[labelfont=bf,font={small},aboveskip=0em, belowskip=0em]{caption}

%%%%%%%%%% Floating Spacing %%%%%%%%%%%%%
% Can also set space around the caption separately
%\setlength\abovecaptionskip{0em}
%\setlength\belowcaptionskip{0em}
% Space between multiple floatings
\setlength{\floatsep}{0em}
% space below floating (distance to the rest of text)
\setlength{\textfloatsep}{0.5em}
% space above tables/figures (distance from the text above)
% for tables in the middle of the page (i.e., not top or bottom), this number is both the top spacing and bottom spacing
\setlength{\intextsep}{0.5em}
%%%%% These two are for double-column floatings, e.g., figure* and table*
\setlength{\dbltextfloatsep}{1em} % floating to text
\setlength{\dblfloatsep}{0.5em} % between floatings

%%%% \tabcolsep changes the horizontal spacing between columns
%\setlength{\tabcolsep}{10pt} % default = 6
%%%% \arraystretch changes the vertical spacing between rows
%\renewcommand{\arraystretch}{1.1} % default = 1

%%%%%%%%%% Subfigures %%%%%%%%%%%%%
% Use "\begin{subfigure}[b]{0.3\textwidth}" for a subfigure
% \usepackage[labelfont=bf,list=true,skip=0em]{subcaption}
\captionsetup[table]{textfont=normalfont,position=bottom}
\captionsetup[figure]{textfont=normalfont,position=bottom}
% Useful environments: \subcaptionbox{caption}{content}
% Useful environments: \subcaptiongroup{ all captions will be labeled as subcaptions }

%%%%%%%%%% Side Captions %%%%%%%%%%%%%
% \begin{SCtable} [⟨relwidth⟩][⟨float⟩] ... \end{SCtable}
% \begin{SCfigure} [⟨relwidth⟩][⟨float⟩] ... \end{SCfigure}
% \begin{SCtable*} [⟨relwidth⟩][⟨float⟩] ... \end{SCtable*}
% \begin{SCfigure*}[⟨relwidth⟩][⟨float⟩] ... \end{SCfigure*}
% \usepackage[rightcaption]{sidecap}

%%%%%%%%%% Wrapfigure %%%%%%%%%%%%%
% \begin{wrapfigure}[lineheight]{position}{width}  ... \end{wrapfigure}
% \usepackage{wrapfig}

%%%%%%%%% Table settings %%%%%%%%%%%%%%
% \usepackage{array}% for extended column definitions
% Multi-line column with fixed length. Use \par for a new line.
\newcolumntype{L}[1]{>{\raggedright\let\newline\\\arraybackslash\hspace{0pt}}m{#1}}
\newcolumntype{C}[1]{>{\centering\let\newline\\\arraybackslash\hspace{0pt}}m{#1}}
\newcolumntype{R}[1]{>{\raggedleft\let\newline\\\arraybackslash\hspace{0pt}}m{#1}}
% bold and center
\newcolumntype{B}{>{\bf}c}
% Rotate text in cells
% \usepackage{rotating}

% \usepackage{booktabs} % provides toprule, bottomrule, midrule, cmidrule, etc.
% \usepackage{multicol,multirow}
% \usepackage{longtable} % provides long table
% \usepackage{supertabular} % similar to long table, allowing tables to take more than one page
% \usepackage{colortbl}
% \usepackage{bigstrut}
% minitab alignment to change inside one cell
\newcommand{\minitab}[2]{\multicolumn{1}{#1}{#2}}

%%%%%%%%%%%%%%%%%%%%%%%%%%%%%%%%%%%%%%%%%%%%
%% Section titles
%%%%%%%%%%%%%%%%%%%%%%%%%%%%%%%%%%%%%%%%%%%%
% \usepackage{titlesec}
% Change section and subsection title to normal font size
\titleformat{\section}{\normalfont\large\bfseries}{\thesection}{1em}{}
\titleformat{\subsection}{\normalfont\normalsize\bfseries}{\thesubsection}{1em}{}

% % Change title spacing
% \titlespacing{\section}{0pt}{0.3em}{0.2em} % left margin, space before, space after
% \titlespacing{\subsection}{0pt}{0.3em}{0.2em} % left margin, space before, space after
% \titlespacing{\subsubsection}{0pt}{0.1em}{1em} % left margin, space before, space after (horizontal)
% \newcommand{\mysubsubsection}[1]{\underline{#1}.}
% \titleformat{\subsubsection}[runin]
% {\normalfont\normalsize\bfseries}{\thesubsubsection}{1em}{\mysubsubsection}

\newcommand{\para}[1]{{\bf \emph{#1}}\,}
\newcommand{\myparagraph}[1]{\noindent\emp{#1}~~}

%%%%%%%%%%%%%%%%%%%%%%%%%%%%%%%%%%%%%%%%%%%%
%% Algorithms
%%%%%%%%%%%%%%%%%%%%%%%%%%%%%%%%%%%%%%%%%%%%
% \usepackage[ruled,lined,linesnumbered,noend]{algorithm2e}
% \usepackage[noend]{algpseudocode}

\makeatletter
\setlength{\algomargin}{.5em}
% Remove right hand margin in algorithm
\patchcmd{\@algocf@start}% <cmd>
  {-1.5em}% <search>
  {0pt}% <replace>
  {}{}% <success><failure>
\setlength{\algomargin}{.5em}   % left margin

\newcommand{\nosemic}{\renewcommand{\@endalgocfline}{\relax}}% Drop semi-colon ;
\newcommand{\dosemic}{\renewcommand{\@endalgocfline}{\algocf@endline}}% Reinstate semi-colon ;
\newcommand{\popline}{\Indm\dosemic}% Undent
\newcommand{\pushline}{\Indp}% Indent

% \SetSideCommentLeft
\newcommand{\SideCommentLeft}[1]{\hfill (*@ \% #1 @*)}
% use \notations{...} or \notes{...} etc.
\SetKwInput{notations}{Notations}
\SetKwInput{notes}{Notes}
\SetKwInput{maintains}{Maintains}

% use \myfunc(right-aligned comment){function name}{...function content...}
\SetKwProg{myfunc}{Function}{}{}
\SetKwProg{lambdafunc}{lambda function}{}{}
% use \parForEach as a regular \For
\SetKwFor{parForEach}{ParallelForEach}{do}{endfor}
\SetKwFor{Justrepeat}{Repeat}{}{}

\SetKw{MIN}{min}
\SetKw{MAX}{max}
\SetKw{OR}{or}
\SetKw{AND}{and}

% CommentStyle
\newcommand\mycommfont[1]{\textit{\textcolor{blue}{#1}}}
%\definecolor{commentgreen}{RGB}{0,128,0}
%\newcommand\mycommfont[1]{\textit{\textcolor{commentgreen}{#1}}}
\SetCommentSty{mycommfont}
% Define the comment font style

%%%%%%%%%%%%%%%%%%%%%%%%%%%%%%%%%%%%%%%%%%%%
%% cref (cleveref)
%%%%%%%%%%%%%%%%%%%%%%%%%%%%%%%%%%%%%%%%%%%%
% \usepackage{nameref,cleveref}
\crefname{section}{Sec.}{Sec.}
\crefname{theorem}{Thm.}{Thm.}
%\crefname{thm}{Thm.}{Thm.}
\crefname{lemma}{Lem.}{Lem.}
\crefname{corollary}{Cor.}{Cor.}
\crefname{table}{Tab.}{Tab.}
\crefname{algorithm}{Alg.}{Alg.}
\crefname{figure}{Fig.}{Fig.}
\crefname{fact}{Fact}{Fact}
% \crefname{table}{Tab.}{Tab.}
\crefname{problem}{Problem}{Problem}

%%%%%%%%%%%%%%%%%%%%%%%%%%%%%%%%%%%%%%%%%%%%
%% Math and Theorem
%%%%%%%%%%%%%%%%%%%%%%%%%%%%%%%%%%%%%%%%%%%%
%% No need to use the packages below with acmart
%\usepackage{latexsym,amsthm,amsmath,amsfonts,amssymb,stmaryrd,mathtools}

% \newtheorem{theorem}{Theorem}%[section]
% \newtheorem{lemma}[theorem]{Lemma}%[section]
% \newtheorem{corollary}[theorem]{Corollary}
% \newtheorem{claim}[theorem]{Claim}
% \newtheorem{fact}[theorem]{Fact}
% \newtheorem{invariant}[theorem]{Invariant}
% \newtheorem{definition}[theorem]{Definition}
\newtheorem{remark}{Remark}

% \left and \right
\let \originalleft \left
\let\originalright\right
\renewcommand{\left}{\mathopen{}\mathclose\bgroup\originalleft}
\renewcommand{\right}{\aftergroup\egroup\originalright}

% \usepackage{scalerel} % stretch a formula

% compact theorem
% \newtheoremstyle{exampstyle}
% {.5em} % Space above
% {1em} % Space below
% {\it} % Body font
% {.5em} % Indent amount
% {\it \bfseries} % Theorem head font
% {.} % Punctuation after theorem head
% {.5em} % Space after theorem head
% {} % Theorem head spec (can be left empty, meaning `normal')

%\theoremstyle{exampstyle} \newtheorem{example}{Example}
%\theoremstyle{exampstyle} \newtheorem{remark}{Remark}
% \theoremstyle{exampstyle} \newtheorem{compactdef}[theorem]{Definition}
% \theoremstyle{exampstyle} \newtheorem{compactlem}[theorem]{Lemma}
% \theoremstyle{exampstyle} \newtheorem{compactprob}[theorem]{Problem}
% \theoremstyle{exampstyle} \newtheorem{compactthm}[theorem]{Theorem}

% \makeatletter
% \newenvironment{proof}[1][Proof]{\par
%   \vspace{-2\topsep}% remove the space after the theorem
%   \normalfont \topsep0pt \partopsep0pt % no space before
%   \trivlist
%   \item[\hskip\labelsep
%         \itshape
%     #1\@addpunct{.}]\ignorespaces
% }{%
%   \hfill\ensuremath{\square} % QED symbol
%   \endtrivlist\@endpefalse
%   %\addvspace{3pt plus 3pt} % some space after
% }
% \makeatother

%%%%%%%%%%%%%%%%%%%%%%%%%%%%%%%%%%%%%%%%%%%%
%% Framedbox
%%%%%%%%%%%%%%%%%%%%%%%%%%%%%%%%%%%%%%%%%%%%
% Use "\begin{mdframed}[style=mystyle] ... \end{mdframed}"
% \usepackage{mdframed}
\definecolor{framelinecolor}{RGB}{68,114,196}
\mdfdefinestyle{mystyle}{linecolor=framelinecolor,innertopmargin=1pt,innerbottommargin=2pt,backgroundcolor=gray!20,skipabove=2pt,skipbelow=0pt}%leftmargin=0,topmargin=
\mdfdefinestyle{densestyle}{linecolor=framelinecolor,innertopmargin=0,innerbottommargin=0,leftmargin=0,rightmargin=0,backgroundcolor=gray!20}
\mdfdefinestyle{compactcode}{linecolor=framelinecolor,innertopmargin=1pt,innerbottommargin=1pt,backgroundcolor=gray!20,skipabove=0pt,skipbelow=0pt,leftmargin=0,rightmargin=0}

% package framed also provides a simple framed box
% \usepackage{framed}

%%%%%%%%%%%%%%%%%%%%%%%%%%%%%%%%%%%%%%%%%%%%
%% Listing codes
%%%%%%%%%%%%%%%%%%%%%%%%%%%%%%%%%%%%%%%%%%%%
% \usepackage{listings}
\newcommand{\codeskip}{{\vspace{.05in}}}

%% LISTING ENVIRONMENT (lstlisting)
\newdimen\zzsize
\zzsize=8pt
\newdimen\kwsize
\kwsize=8pt

\newcommand{\basicstyle}{\fontsize{\zzsize}{1\zzsize}\ttfamily}
\newcommand{\keywordstyle}{\fontsize{\kwsize}{1\kwsize}\ttfamily\bf}

\newdimen\zzlstwidth
%\newlength{\zzlstwidth}
\settowidth{\zzlstwidth}{{\basicstyle~}}
\newcommand{\lcm}{}

\lstset{
%  aboveskip=-0.5 \baselineskip,
%  belowskip=-0.8 \baselineskip,
  xleftmargin=0.5em,
  basewidth=\zzlstwidth,
  basicstyle=\basicstyle,
  columns=fullflexible,
  captionpos=b,
  numbers=left, numberstyle=\small, numbersep=4pt,
  language=C++,
  keywordstyle=\keywordstyle,
  keywords={return,signature,sig,structure,struct,fun,fn,case,type,datatype,let,fn,in,end,functor,alloc,if,then,else,while,with,AND,start,do,parallel,for,parallel_for},
  commentstyle=\rmfamily\slshape,
  morecomment=[l]{\%},
  lineskip={1.5pt},
  columns=fullflexible,
  keepspaces=true,
  mathescape=true,
  escapeinside={@}{@}
}

%%%%%%%%%%%%%%%%%%%%%%%%%%%%%%%%%%%%%%%%%%%%
%% Edits
%%%%%%%%%%%%%%%%%%%%%%%%%%%%%%%%%%%%%%%%%%%%
\newcommand{\newchange}[1]{{\color{red}#1}}
\newcommand{\revision}[1]{{\color{purple}#1}}
\newcommand{\todo}[1]{{\color{red}#1}}

%%%%%%%%%%%%%%%%%%%%%%%%%%%%%%%%%%%%%%%%%%%%
%% Other tools
%%%%%%%%%%%%%%%%%%%%%%%%%%%%%%%%%%%%%%%%%%%%
% \usepackage{tikz} % draw geometric objects

%%%%%%%%%%%%%%%%%%%%%%%%%%%%%%%%%%%%%%%%%%%%
%% PENALTY
%%%%%%%%%%%%%%%%%%%%%%%%%%%%%%%%%%%%%%%%%%%%

% See their definitions and default values in: https://en.wikibooks.org/wiki/TeX/penalty
\binoppenalty=700
\brokenpenalty=0 %100
\clubpenalty=0   %150
\displaywidowpenalty=0   %50
\exhyphenpenalty=50
\floatingpenalty=0
\hyphenpenalty=50
\interlinepenalty=0
\linepenalty=10
\postdisplaypenalty=0
\predisplaypenalty=0 %10000
\relpenalty=500
\widowpenalty=0  %150

%%%%%%%%%%%%%%%%%%%%%%%%%%%%%%%%%%%%%%%%%%%%
%% For Submissions, acmart template
%%%%%%%%%%%%%%%%%%%%%%%%%%%%%%%%%%%%%%%%%%%%
% \setcopyright{none}
% \renewcommand\footnotetextcopyrightpermission[1]{} % This line removes the footnote about the conference and year.
%\def\@titlefont{\huge\sffamily\bfseries} % THIS LINE CHANGES THE FONT OF THE TITLE

% \usepackage{titlecaps} 
%%%%%%%%%% COMMENTS  %%%%%%%%%%%%%%%

\newcommand{\yan}[1]{{\color{violet}{\bf Yan:} #1}}
\newcommand{\yihan}[1]{{\color{purple}{\bf Yihan:} #1}}
\newcommand{\guy}[1]{{\color{cyan}{\bf Guy:} #1}}
\newcommand{\letong}[1]{{\color{blue}{\bf Letong:} #1}}

\newcommand{\modify}[1]{{\color{blue}#1}}

%%%%%%%%%%%%%%%%%%%%
% Our terms
%%%%%%%%%%%%%%%%%%%%
\newcommand{\batchBFS}{cluster-BFS\xspace}
\newcommand{\BatchBFS}{Cluster-BFS\xspace}
\newcommand{\batch}
{compact-cluster\xspace}
\newcommand{\ccbfs}{C-BFS\xspace}
\newcommand{\ff}{\mathcal{F}}
\newcommand{\edgemap}{\textsc{EdgeMap}\xspace}
\newcommand{\mapsparse}{\textsc{EdgeMap-Sparse}\xspace}
\newcommand{\mapdense}{\textsc{EdgeMap-Dense}\xspace}

\newcommand{\bitsubset}{bit-subset\xspace}
\newcommand{\compactdis}{cluster distance vector\xspace}

\newcommand{\bitvector}{\bitsubset}
\newcommand{\Ssubset}[3]{{#1}_{#2}[#3]}
\newcommand{\Sdist}[1]{\Delta_{#1}}
\newcommand{\Sseen}{S_{seen}}
\newcommand{\Snext}{S_{next}}
\newcommand{\Snew}{S_{new}}

\newcommand{\parlay}{ParlayLib}

\newcommand{\edgef}{\textsc{Edge\_F}\xspace}
\newcommand{\condf}{\textsc{Cond\_F}\xspace}

\newcommand{\cnt}{\mbox{\it cnt\:\!}\xspace}
\newcommand{\dis}{\mbox{\it dis\:\!}\xspace}
\newcommand{\ans}{\mbox{\it ans\:\!}\xspace}
\newcommand{\OLD}{\mbox{\it OLD\:\!}\xspace}
\newcommand{\query}{\mbox{\it query\:\!}\xspace}

\newcommand\relatedversion{}
\renewcommand\relatedversion{\thanks{The full version of the paper can be accessed at \protect\url{http://arxiv.org/abs/2410.17226}}} % Replace URL with link to full paper or comment out this line

\fancyhead{}

%%
%% The "title" command has an optional parameter,
%% allowing the author to define a "short title" to be used in page headers.
% \title{Parallel Two Hop Distance Oracle for Exact Shortest-Path Distance Queries on Large Networks}

\iffullversion{\title{Parallel Cluster-BFS and Applications to Shortest Paths}}
\ifconference{\title{Parallel Cluster-BFS and Applications to Shortest Paths\relatedversion}}

%%
%% The "author" command and its associated commands are used to define
%% the authors and their affiliations.
%% Of note is the shared affiliation of the first two authors, and the
%% "authornote" and "authornotemark" commands
%% used to denote shared contribution to the research.
% \hide{
% \author{
%   Letong Wang\\
%   UC Riverside\\
%   lwang323@ucr.edu
%   \and      
%   Guy Blelloch\\
%   Carnegie Mellon University\\
%   guyb@cs.cmu.edu
%   \and 
%   Yan Gu\\
%   UC Riverside\\
%   ygu@cs.ucr.edu
%   \and
%   Yihan Sun\\
%   UC Riverside\\
%   yihans@cs.ucr.edu
% }
\author{Letong Wang\thanks{University of California, Riverside.}
\and Guy Blelloch\thanks{Carnegie Mellon University.}
\and Yan Gu \footnotemark[1]
\and Yihan Sun \footnotemark[1]
}

%%
%% By default, the full list of authors will be used in the page
%% headers. Often, this list is too long, and will overlap
%% other information printed in the page headers. This command allows
%% the author to define a more concise list
%% of authors' names for this purpose.
% \renewcommand{\shortauthors}{Trovato et al.}
% }s
\ifconference{
\fancyfoot[R]{\scriptsize{Copyright \textcopyright\ 2025 by SIAM\\
Unauthorized reproduction of this article is prohibited}}
}
\date{}
\maketitle
%%
%% The abstract is a short summary of the work to be presented in the
%% article.
\begin{abstract}
Breadth-first Search (BFS) is one of the most important graph
processing subroutines, especially for computing the \emph{unweighted distance}.
Many applications may require running BFS from multiple
sources.  
Sequentially, when running BFS on a cluster of
nearby vertices, a known optimization is using 
\emph{bit-parallelism}.
Given a subset of vertices with size $k$ and the distance between any pair of them is no more than $d$, BFS can be applied to all of them in total work $O(dm(k/w+1))$, where $w$ is the length of a word in bits and $m$ is the number of edges.
We will refer to this approach as \emp{cluster-BFS (C-BFS)}. 
Such an approach has been studied and shown effective both in theory and in practice in the sequential setting. 
However, it remains unknown how this can be combined with thread-level
parallelism.

In this paper, we focus on designing efficient \emph{parallel} C-BFS
based on BFS to answer unweighted distance queries. 
Our solution combines the strengths of bit-level parallelism and thread-level parallelism, and achieves significant speedup over the plain sequential solution. 
We also apply our algorithm to real-world applications. In particular, we identified another application (landmark-labeling for the approximate distance oracle) that can take advantage of parallel C-BFS. 
Under the same memory budget, our new solution improves accuracy and/or time on all the 18 tested graphs. 
\end{abstract}

\hide{
Particularly, for a vertex $s$ and a subset of its $k$ neighbors within $d$ hops, BFS can be applied to all of them in total $O(dk/w+1)$ work, where $w$ is the length of a word in bits. 
}

\hide{
\begin{CCSXML}
<ccs2012>
 <concept>
  <concept_id>00000000.0000000.0000000</concept_id>
  <concept_desc>Do Not Use This Code, Generate the Correct Terms for Your Paper</concept_desc>
  <concept_significance>500</concept_significance>
 </concept>
 <concept>
  <concept_id>00000000.00000000.00000000</concept_id>
  <concept_desc>Do Not Use This Code, Generate the Correct Terms for Your Paper</concept_desc>
  <concept_significance>300</concept_significance>
 </concept>
 <concept>
  <concept_id>00000000.00000000.00000000</concept_id>
  <concept_desc>Do Not Use This Code, Generate the Correct Terms for Your Paper</concept_desc>
  <concept_significance>100</concept_significance>
 </concept>
 <concept>
  <concept_id>00000000.00000000.00000000</concept_id>
  <concept_desc>Do Not Use This Code, Generate the Correct Terms for Your Paper</concept_desc>
  <concept_significance>100</concept_significance>
 </concept>
</ccs2012>
\end{CCSXML}

\ccsdesc[500]{Do Not Use This Code~Generate the Correct Terms for Your Paper}
\ccsdesc[300]{Do Not Use This Code~Generate the Correct Terms for Your Paper}
\ccsdesc{Do Not Use This Code~Generate the Correct Terms for Your Paper}
\ccsdesc[100]{Do Not Use This Code~Generate the Correct Terms for Your Paper}
}

%%
%% Keywords. The author(s) should pick words that accurately describe
%% the work being presented. Separate the keywords with commas.
\hide{
\keywords{Do, Not, Us, This, Code, Put, the, Correct, Terms, for,
  Your, Paper}
}
%% A "teaser" image appears between the author and affiliation
%% information and the body of the document, and typically spans the
%% page.
% \begin{teaserfigure}
%   \includegraphics[width=\textwidth]{sampleteaser}
%   \caption{Seattle Mariners at Spring Training, 2010.}
%   \Description{Enjoying the baseball game from the third-base
%   seats. Ichiro Suzuki preparing to bat.}
%   \label{fig:teaser}
% \end{teaserfigure}

\hide{
\received{20 February 2007}
\received[revised]{12 March 2009}
\received[accepted]{5 June 2009}
}
%%
%% This command processes the author and affiliation and title
%% information and builds the first part of the formatted document.

%%%%%%%%%%%%%%%%%%%%%%%%%%%%%%%%%%%%%%%%%%%%
%% Display spacing
%%%%%%%%%%%%%%%%%%%%%%%%%%%%%%%%%%%%%%%%%%%%
% Put this after \begin{document}
\setlength\abovedisplayskip{0pt}
\setlength\belowdisplayskip{0pt}
\setlength\abovedisplayshortskip{0pt}
\setlength\belowdisplayshortskip{0pt}

% \newpage
% ~
% \newpage
\setcounter{page}{1}
\pagenumbering{arabic}

% Our code and data are available at \url{https://doi.org/10.5281/zenodo.13905461} and \url{https://doi.org/10.5281/zenodo.13909778}.
We released our code~\cite{wang_2024_13905461} and graph instances used in the experiments~\cite{wang_2024_13909778}.
% \vspace{-1em}
\section{Introduction} 
Breadth-First Search (BFS) is one of the most important graph
processing subroutines.  Given a graph $G=(V,E)$ and a vertex
$s\in V$, BFS visits all vertices in $V$ in increasing order of
(hop) distance to $s$.  
BFS can be used for many purposes. 
One of the most common use scenarios for BFS is to compute the \emph{unweighted distance} 
from the source.  
%In this paper, we focus on using BFS to compute the unweighted
%distances on an undirected graph.
In this paper, we focus on designing efficient \emph{parallel} approaches
based on BFS to answer unweighted distance queries. 
%In this paper, we focus on the one of the most common use scenarios for BFS, which is to compute the \emph{unweighted distance} 
%from the source.,
Throughout the paper, we use $n=|V|$ and $m=|E|$, and use ``distance'' to refer to the hop distance on an unweighted graph. 

Many applications may require running BFS from multiple
sources.  Examples of this are using BFS for low-diameter
decomposition~\cite{miller2013parallel}, all-pairs shortest paths (APSP), or oracles for
exact or approximate APSP. 
A key observation is that
\emph{bit-parallelism}~\cite{chan2012all} can be used effectively when running BFS on a cluster of
nearby vertices~\cite{chan2012all,akiba2013fast}.
% Particularly, for a vertex $s$ and a subset of its $k$ neighbors within $d$ hops, BFS can be applied to all of them in total $O(dk/w+1)$ work (number of operations), where $w$ is the length of a word in bits~\cite{chan2012all}.
Given a subset of vertices with size $k$ and the distance between any pair of them is no more than $d$, BFS can be applied to all of them in total $O(dm(k/w+1))$ work (number of operations), where $w$ is the length of a word in bits and $m$ is the number of edges~\cite{chan2012all}.
Since a machine word must hold at least $\Omega(\log n)$ bits to store a pointer, this means $w=\Omega(\log n)$.
We will refer to this approach as \emph{\batchBFS (\ccbfs)},
and present more details in \cref{sec:algorithm}. 
Chan~\cite{chan2012all} used this idea to
develop an all-pair shortest-path algorithm that runs in
$O(mn / w)$ work (when $m =\Omega(n \log n \log \log \log n))$.
In addition to saving time, \ccbfs{} also
saves space: 
it only uses $O(d)$ words per $w$ vertices 
instead of a word (or at least enough bits to store a distance) per vertex as in standard BFS. 
Akiba et al.~\cite{akiba2013fast} used this idea in the exact two-hop distance oracle but only considered the special case for $d=2$ (a star-shaped cluster: a vertex and its neighbors). 
We refer to this algorithm as the AIY algorithm. 
Both of the previous papers focus on the sequential setting. 

While the \ccbfs{} with bit-level parallelism has shown to be effective in
sequential settings, surprisingly, we know of no previous work
combining it with thread-level parallelism. 
BFS is one of the most well-studied parallel graph processing problems, 
and state-of-the-art solutions have been highly optimized 
using techniques such as direction optimizations~\cite{Beamer12,shun2013ligra}. 
To be practical, any \ccbfs{} would have to compete with these. 
Our goal is to develop an efficient \ccbfs{} with high parallelism such that it 
(1) achieves the same level of parallelism as the standard parallel BFS, 
with additional benefits by using clusters,
(2) supports a clean interface that is flexible for different parameter settings (i.e., varying $d$ and $k$), 
and (3) facilitates various real-world applications. 
In this paper, we provide a systematic study of parallel \ccbfs{} and achieve all three goals above. 

To achieve \emph{high performance}, we design an efficient parallel algorithm. 
Our algorithm is work-efficient (i.e., it has the same asymptotical work as the sequential counterpart). 
It has the same span as regular parallel BFS algorithms (e.g.,~\cite{shun2013ligra}), 
which $\tilde{O}(D)$ for graph diameter $D$, and thus is best suited for small-diameter graphs, such as
social networks, computer networks, or web graphs.  
Our algorithm uses the \emph{directional optimization} that has been shown to be useful for parallel BFS. 
By doing this, our algorithm achieves the strengths of both bit-level and thread-level parallelism: it has high parallelism as in the state-of-the-art parallel BFS algorithms and obtains additional performance gain by using bit-level parallelism. 

To achieve a \emph{flexible interface}, we designed our algorithm for
general $k$ and $d$, easily integrating into various applications with
user-defined parameters. 

To understand how \ccbfs{} can \emph{facilitate real-world applications}, we study two Distance Oracle (DO) techniques that can benefit from \ccbfs{}: the exact DO as in~\cite{akiba2013fast} and the Landmark Labeling (LL) for an approximate DO. As far as we know, our work is the first to use \ccbfs{} to accelerate LL. 

We implemented our \ccbfs{} algorithm and the two applications.
We compare our algorithm with multiple baselines to study the
performance gain in depth, and test it on 18 graphs with various types and sizes on a 96-core machine. 
%Our approach has three potential sources of improvement over 
%standard sequential BFS: 1) direction optimization, 2) bit-parallelism,
%and 3) thread-level parallelism.  
Compared to standard sequential BFS, 
our algorithm employs both bit-parallelism (on clusters) and thread-level parallelism 
(along with optimizations used in parallel BFS) to improve performance. 
In the simplest setting where $k=64$ and $d=2$, the combination of them 
enables up to {1119}$\times$ speedup ({500}$\times$ on average) 
compared to the plain sequential BFS,
%Each of the three techniques contributes to this high improvement, 
%Both techniques
where bit-level and thread-level parallelism each contributes about 20$\times$
speedup. 
%We summarize the performance gain of each technique in \cref{fig:performance}. 
\hide{Interestingly, we observed that both bit-level parallelism
and thread-level parallelism more positively impact the performance when the other technique presents, 
and work very well in synergy. 
Similar results are achieved for different values of $d$, where the total improvement is xx times. \letong{?}
Therefore, we believe our work on an efficient implementation combining both of them fills the gap in the existing 
study of both \batchBFS{} and parallel BFS. }
Interestingly, by comparing with the performance of existing work 
(Ligra~\cite{shun2013ligra}, where only thread-parallelism is used, 
and AIY~\cite{akiba2012shortest}, where only bit-level parallelism is used),
we observed that both bit-level parallelism
and thread-level parallelism and work very well in synergy. 
Each of them still fully contributes to the performance when the other
is present, achieving the same level of improvement
as when used independently (see \cref{fig:performance}). 
%Similar results are achieved for different values of $d$, where the total improvement is xx times. \yihan{check if this is tue}
Therefore, we believe our work on an efficient implementation combining thread-level and bit-level parallelism fills the gap in the existing 
study of both \ccbfs{} and parallel BFS. 

We also studied the performance of our \ccbfs{} with different parameters. Typically, $k$ is set to be $\Theta(w)$, and the work (and space) is proportional to $d$. Our result shows that the running time increases almost linearly with value of $d$, especially when $d$ is small. 
This explains why existing work (e.g.,~\cite{akiba2012shortest}) tends to choose the smallest $d=2$ case in real-world applications.

Applying our algorithm also gives significant improvement to the aforementioned applications. 
For the 2-hop distance oracle, our parallel implementation outperforms the sequential AIY algorithm~\cite{akiba2013fast} by 9--36$\times$, 
and can process much larger graphs than the AIY algorithm. 
For landmark labeling (LL), with a fixed memory budget, 
\ccbfs{} improved regular LL in either accuracy or preprocessing time on all 18 tested graphs,
and improved \emph{both} on {14/18} graphs. 
This is due to the saving in space allowing more
landmarks to be used for \ccbfs{}. 
We observed that using $d=2$ achieved better overall performance in accuracy, time and space. 
% We plan to release our code.
% \modify{We release our code and data on Zenodo~\cite{wang_2024_13905461,wang_2024_13909778}.}
\ifconference{Due to the page limit, we present more results in our full version. }
\iffullversion{Due to the page limit, we present more results in the Appendix. }

\begin{figure}
  \centering
  \includegraphics[width=\columnwidth]{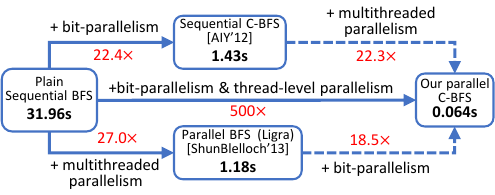}
  %\caption{\small\textbf{Summary of performance improvements gained by three main techniques.} 
%  We test the  running time of BFSs from a cluster of 64 vertices on different baselines. 
%  Some baselines without clusters are achieved by using Ligra~\cite{shun2013ligra}. 
%  The numbers are geometric means across 21 tested graphs.  Full results are shown in \cref{table:microbenchmark} and \cref{fig:par_compare}. 
%  \yihan{update the figure and caption}
%  }\label{fig:performance}
\caption{\small\textbf{Performance comparison with existing work.} 
  We test the running time of BFSs from a cluster of 64 vertices. 
  The baselines are Ligra~\cite{shun2013ligra} that only uses thread-level parallelism and AIY'12~\cite{akiba2012shortest} that only
  uses bit-level parallelism. 
  The numbers are geometric means across 18 graphs.
  Full results are shown in \cref{table:microbenchmark} and \cref{fig:par_compare}.
  }\label{fig:performance}

\end{figure}

\hide{Since the importance of BFS, how to improve the efficiency of BFS is a hot topic.  Researchers approach this goal through different technique paths. One approach is exploring bitwise parallelism.  The representations of this approach is Cluster-BFS, which conduct BFSs from a cluster of vertices that are close to each other (within restricted diameter).  In theory, the bitwise operation usually can save a $\log n$ factor of work and space \cite{chan2012all}. In practice, it was used to accelerate and save space for the exact 2-hop distance oracles \cite{akiba2013fast}. Another approach is taking the advantage of thread-level parallelism. One representation is Ligra\cite{shun2013ligra}. Ligra BFS explore the thread level parallelism and directional optimizations, that accelerate running BFS from a single source. Both approaches are shown to be efficient in practice. However, there is no existing work that combine the two techniques together. In this paper, we propose parallel \batchBFS, that both explore the benefits of bitwise and thread-level parallelism. We observed an interesting fact that they can enhance the effects of each other.  On a 96-core machine, the speedup of related work Ligra BFS\cite{shun2013ligra} and Akiba BFS\cite{akiba2013fast} to the standard sequential BFS is xx and xx, and the speedup of our parallel \batchBFS is xx, which is greater than the product of previous two speedups. The relation is shown in \cref{fig:performance}. }

\section{Preliminaries}
\label{sec:prelim}
% \subsection
% {Notations and Computational Model}
\myparagraph{Notations.} Let $G=(V,E)$ be an unweighted graph.  
%We use $n$ and $m$ to denote the number of vertices $|V|$ and the number of edges $|E|$, respectively. 
We use $n=|V|$, $m=|E|$, %to denote the number of vertices $|V|$ and the number of edges $|E|$, respectively. 
and use $D$ to denote the diameter of the graph.
%We focus on undirected graphs because both of our applications (landmark labeling and 2-hop distance oracle)
%are used on undirected graphs, but the \batchBFS{} idea is applicable to directed graphs as in regular parallel BFS. For simplicity, 
%Both of our applications (landmark labeling and 2-hop distance oracle) are designed for undirected graphs, 
%but the idea of \batchBFS{} is general on both directed and undirected graphs. 
%Therefore, in \cref{sec:parallel_BFS} and \cref{sec:algorithm}, we present our algorithms on directed graphs to keep it general.
%In \cref{sec:applications}, we assume an undirected graph to better describe the applications. %\letong{decide}
Let $N(v)=\{u\in V~|~(v,u) \in E\}$ be the set of neighbors of vertex $v\in V$. In directed graphs, $N^+(v)$ and $N^-(v)$ represent outgoing and incoming neighbors, respectively. 
We use $\delta(u,v)$ to denote the shortest distance between $u$ and $v$. 
We assume machine word size $w=\Omega(\log n)$,  
%In realistic settings, we are interested in large input size where 
such that the vertex and edge IDs are within constant words. 
Let $S=\{s_1,s_2,...,s_k\}$ represent a cluster, where $k$ is the cluster size. Let $d$ be the diameter (maximum distance between any pair) of the cluster. 
We summarize the notations in \cref{table:notations}.
%Note that $w=\Omega({\log n})$ to hold the vertex and edge ids within constant words.  
%We make this assumption throughout the paper. 

\myparagraph{Computational Model.} We use the binary fork-join parallel model~\cite{CLRS,blelloch2020optimal}, with work-span analysis~\cite{blumofe1999scheduling,gu2021parallel}.
We assume a set of \thread{}s that access a shared memory.
A thread can \forkins{} two child \thread{s} to work in parallel, and then waits.
When both children complete, the parent thread continues.
A parallel for-loop can be simulated by recursive \forkins{s} in logarithmic levels.
The \defn{work} of an algorithm is the total number of instructions, and
the \defn{span} is the length of the longest sequence of dependent instructions.
We can execute the computation using a randomized work-stealing scheduler~\cite{blumofe1999scheduling,gu2022analysis}.

We assume two unit-cost \emph{atomic} operations. 
% \textsc{Compare\_And\_Swap} $(p, v_{\mathit{old}}, v_{\mathit{new}})$ and \textsc{fetch\_and\_or}$(p,v_{\mathit{new}})$.
\textsc{compare\_and\_swap}$(p, v_{\mathit{old}}, v_{\mathit{new}})$ atomically reads the memory location pointed to by $p$, 
and writes value $v_{\mathit{new}}$ to it if the current value is $v_{\mathit{old}}$. 
It returns \emph{true} if it succeeds and \emph{false} otherwise.
%\textsc{fetch\_and\_or} can be implemented by \textsc{compare\_and\_swap}.
\textsc{Fetch\_And\_Or}$(p,v_{\mathit{new}})$ atomically reads the memory location pointed to by $p$, takes the bitwise \textsc{Or} operation with value $v_{\mathit{new}}$, and stores the results back. It returns \emph{true} if $v_\mathit{new}$ successfully sets any bit stored in $p$ to be 1, and \emph{false} otherwise.  
Most machines directly support these instructions. %in hardware.

\myparagraph{Parallel BFS.} We briefly review parallel BFS, because it is one of our baselines, and some of the concepts are also used in our \batchBFS. 
Parallel BFS starts from a single source $s\in V$ \iffullversion{(high-level idea in \cref{alg:simpleBFS})}. 
The algorithm maintains a \emp{frontier} of vertices to explore in each round, starting from the source, and finishes in at most $D$ rounds. In round $i$, the algorithm processes (visits their neighbors) of the current frontier $\ff_{i}$, and puts all their (unvisited) neighbors in the next frontier $\ff_{i+1}$. If multiple vertices in $\ff_{i}$ attempt to add the same vertex to $\ff_{i+1}$, a \textsc{compare\_and\_swap} is used to guarantee that only one will succeed. 

One widely-used optimization for parallel BFS is directional optimization~\cite{Beamer12,shun2013ligra}. At a high level, when the frontier size $|\ff_i|$ is large, the algorithm will not process $\ff_i$, but instead visit each unprocessed vertex $v$, and determine if $v$ has an incoming neighbor in $\ff_i$. If so, $v$ will be put in $\ff_{i+1}$. Such an optimization is observed to be effective, especially on small-diameter graphs. \iffullversion{We present more details in \cref{sec:parallel_BFS}.}\ifconference{We present more details in our full version paper.}

\begin{table}[t]
\small

\rule{\columnwidth}{.05em} % \vspace{-.05in}

% \vspace{-.05in}

\begin{description}[labelwidth=.1in,leftmargin=.2in]%[labelwidth=.3in, leftmargin=.3in, itemindent=.2in]
    \item [$G=(V,E)$]: the input graph. $n=|V|$ and $m=|E|$. 
    \item[$S=\{s_1,...,s_k\}$]: the source cluster for the BFS.
    \item[$k$]: the cluster size, i.e., $k=|S|$.
    \item[$d$]: the diameter of the cluster.
    \item[$w$]: the length of a word in bits. $w=\Omega(n)$. 
    \item[$D$]: the diameter of the graph. 
    \item[$\delta(u,v)$]: the shortest distance between $u$ and $v$.    
\end{description}
\vspace{-0.08in}
\rule{\columnwidth}{.05em} % \vspace{-.05in}
% \vspace{-.08in}
\caption{\textbf{Notations in the paper.}\label{table:notations}}
\end{table}

\section{Parallel \BatchBFS{}}\label{sec:algorithm}

\BatchBFS (\ccbfs) runs BFS from a cluster of sources $S \subseteq V$.  
If the sources have diameter $d$ (maximum distance between any pair), then all distances from $S$ to any vertex $v \in V$ will differ by at most $d$.  
This means that if we run a set of BFSs from $S$, synchronously, every $v \in V$ will appear in at most $d + 1$ consecutive frontiers.  
\BatchBFS{} takes advantage of this by representing all the sources in $S$ that can visit a given vertex, on a given round, as a vector of booleans (bits).
In this way, each vertex will be visited at most $d+1$ times instead of $|S|$ times if all searches are performed separately.  
More details are described in \cref{sec:bitwise}.  
Importantly, if the bit-vector fits in $O(1)$ words, 
the distances of all $|S|$ sources can be handled (propagated from a vertex in the current frontier to a neighbor)
with $O(1)$ bitwise logical operations.  
%all operations can be done with $O(1)$ bitwise logical operations.  
Since a machine word must hold at least $\Omega(\log n)$ bits (so it can represent a pointer), this means the algorithm can save a factor of $\Omega(\log n/d)$ work.

It appears that Chan first described this idea~\cite{chan2012all}, but he did not go into any details of the implementation, but just saying that this is possible.
Akiba et al.~\cite{akiba2013fast} later showed a concrete implementation based on this idea, with the limitations that it is sequential and only works on the cluster of a star with $d=2$ (a center vertex and its neighbors).
To the best of our knowledge, there has been no previous work on developing a parallel implementation of the \batchBFS algorithm. 
Indeed it is challenging to achieve high performance given how BFS has been widely studied with numerous optimizations both sequentially and in parallel.
In this paper, we propose our algorithm, given as \cref{alg:c-bfs}, which is an efficient parallel \batchBFS{} implementation with a general interface and low coding effort.
In the following sections, we will first introduce the bitwise representation to maintain the distances in \batchBFS{} and then give our parallel \batchBFS algorithm.

\subsection{Cluster Distance Representations}\label{sec:bitwise}
Given a set $S$ of source vertices with diameter $d$, \batchBFS computes a compact representation of the shortest distance from every source in $S$ to every vertex in $V$. 
% Before we introduce our bitwise representation, let us see some lemmas first.
The idea of \batchBFS{} is based on the following fact.

\begin{fact}
  \label{fact:triangle_inequality}
  On an unweighted graph, if the distance betweeen vertex $s_1$ and $s_2$ is $d$, then for any vertex $v\in V$, $|\delta(s_1,v)-\delta(s_2,v)|\leq d$.
  %$\delta(s_1,s_2)=c$, 
\end{fact}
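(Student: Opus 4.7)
The plan is to prove this fact as a direct consequence of the triangle inequality for the shortest-path metric on an unweighted graph. First I would note that the function $\delta(\cdot,\cdot)$ is a metric (nonnegative, symmetric on undirected graphs, and satisfying the triangle inequality), because concatenating a shortest $s_1$-to-$s_2$ path with a shortest $s_2$-to-$v$ path yields a valid $s_1$-to-$v$ walk whose length upper-bounds $\delta(s_1,v)$.

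Then I would apply the triangle inequality in both directions. On the one hand,
\[
\delta(s_1,v) \;\leq\; \delta(s_1,s_2) + \delta(s_2,v) \;=\; d + \delta(s_2,v),
\]
which rearranges to $\delta(s_1,v) - \delta(s_2,v) \leq d$. On the other hand, by the same argument with the roles of $s_1$ and $s_2$ swapped,
\[
\delta(s_2,v) \;\leq\; \delta(s_2,s_1) + \delta(s_1,v) \;=\; d + \delta(s_1,v),
\]
so $\delta(s_2,v) - \delta(s_1,v) \leq d$. Combining the two bounds gives $|\delta(s_1,v) - \delta(s_2,v)| \leq d$, which is exactly what was to be shown.

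There is no real obstacle here: the fact is essentially a one-line application of the triangle inequality, and the only care needed is to remark that the shortest-path distance on an unweighted graph actually satisfies the triangle inequality (which follows from the concatenation argument above). For directed graphs, the same argument still goes through provided one interprets $\delta(s_1,s_2) = d$ appropriately (e.g., taking $d$ to be an upper bound on both $\delta(s_1,s_2)$ and $\delta(s_2,s_1)$, consistent with how the cluster diameter is used elsewhere in the paper).
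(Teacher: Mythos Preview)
Your argument is correct and is exactly the standard triangle-inequality derivation one would expect. Note that the paper states this as a \emph{fact} without proof, treating it as self-evident (hence the label ``Fact'' rather than ``Lemma''); your proposal simply spells out the one-line justification the paper omits.
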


%\cref{fact:triangle_inequality} can be proved by the triangle inequality. 
%Because the graph is unweighted, the distance in the graph satisfies the triangle inequality.
%\cref{fact:triangle_inequality} says that if two vertices $s_1$ and $s_2$ are $d$ distance apart, for any vertex $v\in V$, the difference of distance from $v$ to $s_1$ and $s_1$ is no larger than $d$.  It can be provided by the triangle inequality.
For example, if $s_1$ and $s_2$ are neighbors, the distances from a vertex $v$ to them can differ by at most 1. 
% Without loss of generality, assume $\delta(s_1,v)> \delta(s_2,v)$. 
% Then $\delta(s_1,v)$ must be no larger than $\delta(s_1, s_2)+\delta(s_2,v)$. 
% \cref{fact:triangle_inequality} can then be proved by observing that $\delta(s_1,v)-\delta(s_2,v) \leq \delta(s_1, s_2)\le d$. 
We can further extend Fact~\ref{fact:triangle_inequality} to \cref{corollary:distance_range}, which says if a cluster of vertices $S$ has diameter $d$, 
the distances from $v$ to vertices in $S$ differ by at most $d$. 
% which is used to define the compact distance representation and design \batchBFS.

\begin{corollary}
  \label{corollary:distance_range}
  On an unweighted graph, given a set $S$ of vertices with diameter no more than $d$, for any vertex $v\in V$, we have 
  \[\max_{s\in S}\delta(s,v) - \min_{s\in S}\delta(s,v) \leq d\]
\end{corollary}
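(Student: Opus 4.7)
The plan is to derive the corollary directly from Fact~\ref{fact:triangle_inequality} by identifying the two sources that realize the max and min distances and invoking the triangle-style inequality on them.

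First, I would fix an arbitrary vertex $v \in V$ and let $s^\ast \in \arg\max_{s \in S} \delta(s,v)$ and $s_\ast \in \arg\min_{s \in S} \delta(s,v)$. Since both $s^\ast$ and $s_\ast$ lie in $S$, and $S$ has diameter at most $d$, we have $\delta(s^\ast, s_\ast) \leq d$.

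Next, I would apply Fact~\ref{fact:triangle_inequality} to the pair $(s^\ast, s_\ast)$. The fact is stated for two vertices whose distance is exactly $d$, but the argument goes through verbatim for any pair at distance $d' \leq d$: if we let $d' = \delta(s^\ast, s_\ast)$, then $|\delta(s^\ast,v) - \delta(s_\ast,v)| \leq d' \leq d$. In particular,
\[
\max_{s \in S} \delta(s,v) - \min_{s \in S} \delta(s,v) = \delta(s^\ast,v) - \delta(s_\ast,v) \leq d,
\]
which is the desired inequality. Since $v$ was arbitrary, the bound holds for all $v \in V$.

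The proof is essentially a one-line consequence of the triangle inequality applied to the extreme pair, so I do not anticipate a real obstacle. The only very mild subtlety is that Fact~\ref{fact:triangle_inequality} as stated assumes the pairwise distance equals $d$, whereas here we only know it is at most $d$; this is handled by the obvious monotonicity argument above (or by re-applying the fact with $d$ replaced by $\delta(s^\ast, s_\ast)$). One might also want to note the degenerate case where $S$ contains a single vertex or $v$ is unreachable from some source, but both are trivially consistent (the left-hand side is $0$ in the first case, and if some distance is $\infty$ we can restrict to sources that reach $v$; alternatively, since $S$ has finite diameter $d$, all sources lie in the same connected component, so either all distances to $v$ are $\infty$ or all are finite).
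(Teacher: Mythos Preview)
Your proof is correct and follows essentially the same approach as the paper: both derive the bound directly from Fact~\ref{fact:triangle_inequality} by applying it to a pair of sources in $S$ (you pick the extremal pair explicitly, the paper observes it holds for all pairs and then takes the max), using that the diameter condition bounds their pairwise distance by $d$.
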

\hide{
\cref{corollary:distance_range} can be easily prooved by the definition of diameter and \cref{fact:triangle_inequality}. 
The diameter of $S$ is no more than $c$ means for any two vertices $s_1, s_2 \in S$, their distance $\delta(s_1, s_2)\leq c$. According to \cref{fact:triangle_inequality}, for any vertex $v\in V$, $\forall s_1, s_2\in S, |\delta(s_1,v)-\delta(s_2,v)|\leq \delta(s_1, s_2)$, so that $\max_{s\in S}\delta(s,v) - \min_{s\in S}\delta(s,v)=\max_{s_1, s_2\in S} |\delta(s_1,v)-\delta(s_2,v)| \leq \max_{s_1,s_2 \in S} \delta(s_1, s_2) \leq c$.}

% Based on \cref{lemma:distance_range}, we can say for any vertex $v\in V$, the vertices in $S$ will visit $v$ in $c+1$ continuous rounds. 

Therefore, for each vertex $v$, we can classify the sources in $S$ by their distances to $v$.  
Let $\Sdist{v}=\min_{s\in S}\delta(s,v)$ be the smallest distance from any source in $S$ to $v$. 
According to \cref{corollary:distance_range}, the distance between $v$ and any $s\in S$ must be in range $[\Sdist{v}, \Sdist{v}+d]$. 
This divides all vertices in $S$ in $d+1$ different subsets based on their distances to $v$. 
% Let $S_i[v]$ be the subset of sources in $S$ that has distance to $v$ as $\delta_{\min}(v)+i$. 
Let $\Ssubset{S}{v}{i}$ be the subset of sources in $S$ that has a distance to $v$ as $\Sdist{v}+i$. %(reach $v$ in round $\Sdist{v}+i$). 
More formally, 
\[\Ssubset{S}{v}{i} = \{s\in S ~|~ \delta(s,v) = \Sdist{v} + i\}\label{eq:subset} \]
% ~ \text{, for $i \gets 0$ to $c$} \]
% Let $\delta^-(v)$ and $\delta^+(v)$ represent $\min_{s\in S}\delta(s,v)$ and $\max_{s\in S}\delta(s,v)$ separately.
% Therefore, for every $v$ we define $c$ subsets of $S$ by: 
% We define the smallest distance from $S$ to $v$ as $\delta_{\min}(v)$ and the largest one as $\delta_{\max}(v)$. For every $v$ we can define $c+1$ disjoint subsets of $S$ by their distance to $v$:
% \[S_i[v] = \{s\in S | \delta(s,v) = \delta_{\min}[v]  + i\}~ \text{, for $i \gets 0$ to $c$} \label{eq:subset}\]
Then for a vertex $v$, the distances between $v$ and all sources in $S$
can be represented by the $(d+2)$-tuple $\langle \Ssubset{S}{v}{0..d}, \Sdist{v} \rangle$,
which we call the \emph{\compactdis{}} of $v$ to $S$.

Note that if $|S|=w$, we can use a one word bit-vector to represent any subset of $S'\subseteq S$: 
bit $i$ is 1 iff. the $i$-th element in $S$ is also in $S'$. 
We call such a representation of a subset of $S$ a \emph{\bitsubset}. 
% Since each $S[v][i]$ is a subset of $S$ with size $w$,
% each of them can be represented as a \bitsubset{} with $O(1)$ words. 
%we can simply use $O(1)$ word bit-flag to represent them. 
In this way, a \compactdis{} 
only takes $d+1$ words for \bitsubset{s} and one byte to store the shortest distances from $v$ to $|S|$ sources (assuming $D < 256$).

\hide{
According to \cref{corollary:distance_range}, $i$ is in range $[0,d]$, thus, $d+1$ subsets of $S$ are enough to cover $S$, i.e. $S_0[v]\cup ... \cup S_{d}[v] = S$ if $v$ is reachable from $S$. 
% Note that if $v$ is reachable from $S$, $S_0[v]\cup ... \cup S_{c}[v] = S$. 
The shortest distances from every vertex in $S$ to each vertex $v$ can be represented by $\langle S_0[v],\dots, S_{d}[v],\delta_{\min}(v) \rangle$. 
}

\begin{figure}
    \centering
    \includegraphics[width=\columnwidth]{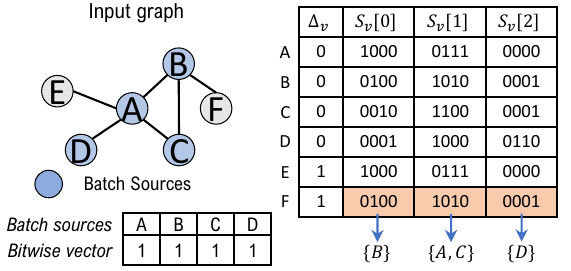}
    \caption{\small \textbf{Illustration of bitwise representation.}  
    The batch set $S$ is $\{A,B,C,D\}$. 4-bit \bitvector{s} are used to represent subsets of $S$. 
    % $A$ is represented by the left most bit, and $D$ is represented by the right most bit. 
    $\Sdist{v}$ is the smallest shortest distance from any vertex in $S$ to $v$. 
    % The subset $S_i(v)$ is defined as $\{s\in S | \delta(s,v) = \delta_{min}(v)+i\}$. 
    The subset $\Ssubset{S}{v}{i}$ is defined as $\{s\in S | \delta(s,v) = \Sdist{v} +i\}$.
    }
    \label{fig:bitwise}
\end{figure} 
An illustration for the \bitsubset{} and \compactdis{} is shown in \cref{fig:bitwise}. 
In this example, $S$ is the set $\{A,B,C,D\}$, and the diameter of the subgraph is $d=2$. We need subsets $\Ssubset{S}{v}{0..2}$ for each vertex $v$, which are represented by the \bitsubset{s}, each with four bits. 
%$A$ is represented by the leftmost bit, and $D$ is represented by the right most bit. 
$A$ to $D$ are represented by the four bits from left to right. 
From the \compactdis{}, we can recover the shortest distance from all the sources $s\in S$ to each vertex $v\in V$ by the fact that each source in $\Ssubset{S}{v}{i}$ %contains the vertices in $S$ that 
has distance $\Sdist{v}+i$ to $v$. For example, for vertex $F$, $\Ssubset{S}{F}{1}$ is 1010, which represents the subset $\{A,C\}$, we can infer $\delta(B,A)=\delta(B,C)=\Sdist{F}+1=2$.  

The main idea of \batchBFS{} is to use bitwise operations on \bitsubset{s} to quickly compute the union/intersection of the sets,
allowing us to use the \compactdis{} of $v$ to compute the \compactdis{s} of its neighbor $u$ in constant time. 
In the following, we elaborate on our parallel \batchBFS{} algorithm. 
%Similarly, $S_1$ is 1010, meaning that $A$ and $C$ have distance 2 to $F$. %, and $S_3$ is 0001 that represents $\{D\}$ who has distance 3 to $v$. 
%In practice, we choose an integer length size of sources for $S$, so that $S_i[v]$ can be represented by a integer. 
%For simplicity of description later, we define a type called \textit{\bitvector}, which is actually an integer and  represents a subset of $S$, and we use set operations to describe it.  

% Bit-parallel optimizations explore the fact that computers can perform bitwise operations on a word at once. The word length is commonly 8, 32 or 64 in computers of the day. In the following of this paper, we use $b$ to represent the word length.

% In sequential setting, people develop \batchBFS in All Pairs Shortest Paths algorithms (APSP) \cite{} and exact distance oracles \cite{} to accelerate the process of dense vertices (vertices with large degrees).  Although the definition details are different for different applications, their high level idea is to get BFS distances from a vertex and its neighbors to all vertices in $V$ by one bitwise-batch BFS, which  compacts $b$ neighbors in a word and update it during the edge visit. In this paper, we will only consider the 1-hop neighbors in our algorithm, but it can be easily extend to further hop neighbors. In this section, we first introduce the definition and review the sequential \batchBFS in \cref{sec:seq_bit_bfs}, then we introduce our thread-level parallel \batchBFS in \cref{sec:pal_bit_bfs}.

 \begin{algorithm}[t]
  \small
  \caption{\BatchBFS search from $S$\label{alg:c-bfs}}
  \SetKwProg{myfunc}{Function}{}{}
  \SetKwFor{parForEach}{ParallelForEach}{do}{endfor}
  \SetKwFor{mystruct}{Struct}{}{}
  \SetKwFor{pardo}{In Parallel:}{}{}
  \SetKwInOut{Maintains}{Maintains}
  \SetKwIF{If}{ElseIf}{Else}{if}{}{else if}{else}{end if}%
  \KwIn{\\
  A graph $G=(V,E)$, a cluster $S\subseteq V$ with diameter $d$} 
  \KwOut{\\
  \compactdis{s} $\langle \Ssubset{S}{v}{0..d},\Sdist{v}\rangle$ for all $v\in V$.
  }
  \Maintains{\\
  \noindent$i$: the current round number, initialized to 0\\
  \noindent$\Sseen[\cdot],\Snext[\cdot]$: array of \bitvector for each $v\in V$\\
  \noindent$r[v]$: the lastest round $v$ is in the frontier\\
  \noindent$\ff_i$: frontier vertices in round $i$
  }
  \DontPrintSemicolon
  \tcp{Initialization}
  \parForEach{$v\in V$}{
    $\Sseen[v]\gets\emptyset$, $\Snext[v]\gets\emptyset$\\
    $\Sdist{v}\gets \infty$\\
    $r[v]\gets \infty$
  }
  \lFor{$s\in S$}{ 
    $\Sseen[s]\gets \{s\}$
  }
  $i\gets 0$\\
  $\ff_0 \gets S$\\
  \tcp{Traversing}
  \While{\upshape$\ff_i \neq \emptyset$}{
    \parForEach{\upshape$u\in \ff_{i}$\label{line:batch_edgemap_begin}}{
      $\Snew \gets \Snext[u] \setminus \Sseen[u]$ \label{line:diff_label}\\
      \lIf{\upshape$\Sdist{u}=\infty$ \textbf{then}}{$\Sdist{u}\gets i$ \label{line: update_Sdist}}
      $\Ssubset{S}{u}{i-\Sdist{u}}\gets \Snew$ \label{line: updateSset}\\
      $\Sseen[u] \gets \Sseen[u] \cup \Snew$
      \label{line:batch_edgemap_first}
    }
    \parForEach{\upshape$u \in \ff_{i}$ \label{line:ccbfs_edgemap_begin}}{
      \parForEach{\upshape$v\in N(u)$ and $i-\Sdist{v}<d$ \label{line:ccbfs_cond}}{
        \If{\upshape\textsc{Fetch\_And\_Or}($\Snext[v], \Sseen[u]$)\label{line:ccbfs_fao}}{
        \If{\upshape\textsc{compare\_and\_swap}$(r[v], r[v], i)$\label{line:ccbfs_cas}}{
          $\ff_{i+1} \gets \ff_{i+1}\cup \{v\}$
          \label{line:ccbfs_edgemap_end}
          }
        }
      }
    }
    $i\gets i+1$\\
  }
  \Return $\langle\Ssubset{S}{v}{1..d}, \Sdist{v}\rangle$ for all $v\in V$
  \end{algorithm}

% \subsection{Parallel \titlecap{\batchBFS}}
\subsection{Our Parallel Algorithm}
\label{sec:parallel_batchBFS}

%Parallel \batchBFS is to compute the bitwise representations for each vertex with thread-level parallelism. 
In this section, we introduce our parallel algorithm to compute the \compactdis{} for all vertices given a source cluster $S$, which is $\langle \Ssubset{S}{v}{0..d}, \Sdist{v} \rangle$ for all $v\in V$. 
The pseudocode of our \batchBFS is shown in \cref{alg:c-bfs}.
% The first idea is that, for an edge $(u,v)$, if a vertex $s \in S$ visits $u$ in round $i$, then $s$ should visit $v$ in round $i+1$. 
% Recall that the key to using \edgemap{} is to specify the two functions \edgef{$(u,v)$} to process an edge $(u,v)$,
% and \condf{$(v)$} to decide if $v$ should be put in the next frontier. 

Our \batchBFS algorithm is based on the following fact: 
if $u$ and $v$ are neighbors and there is a path from a source $s\in S$ to $u$ with length $i-1$, then there must exist a path from $s$ to $v$ with length $i$. 
In round $i$, $u$ records all the sources in $S$ that reach it in round $i$ by a \bitvector{}. 
When $u$ visits its neighbor $v$, $u$ propagates this \bitvector{} to $v$ by taking a bitwise $\textsc{Or}$ operation with the \bitsubset{} representing the vertices reaching $v$ in round $i+1$ (\cref{alg:c-bfs}: \cref{line:ccbfs_fao}). 
According to \cref{corollary:distance_range},  all the vertices in $S$ will visit $v$ at least once during round $\Sdist{v}$ to round $\Sdist{v}+d$;
in other words, all the vertices will be put into the frontier for $d+1$ times. 
Therefore, we need to record the number of times $v$ has been put in the frontier. 
When $v$ has been put in the frontier $d+1$ times, since all sources in $S$ must have already visited $v$, we do not need to process $v$ anymore. 
Otherwise, we will process $v$ and put it to the next frontier since other sources in $S$ may visit $v$ in the future. 

%Our \batchBFS algorithm is given in \cref{alg:c-bfs}.
In our algorithm, we use a boolean array $\Sseen[\cdot]$ to store whether a vertex has been visited in all previous rounds by any sources in $S$, and $\Snext[v]$ includes the vertex if it is also in the current frontier (i.e., visited by any vertex in the current round).
We denote $\Sdist{v}$ as the first round that any vertex from $S$ touches vertex $v$.
\cref{alg:c-bfs} has two stages: initialization and traversing.

%we will not process it any more. 
% Instead of only visit each vertex once as in normal BFS, we continue visit a vertex until it has been visited $c$ times. 
% Note that $S_0[v]$ to $S_{c-1}[v]$ are comptued through the \batchBFS, but $S_c[v]$ is computed as post computed by $S[v]-(S_0[v]\cup ... \cup S_c[v])$.

%\input{figs_algs/alg_batchBFS.tex}

% circular queue
\myparagraph{Initialization} 
This step is relatively simple.  
We initialize the arrays of $\Sseen[\cdot]$, $\Snext[v]$, and $\Sdist{v}$.
We use another, array $r[\cdot]$, to avoid duplication of vertices in the frontier. Later in the traversing stage, when multiple vertices want to add $v$ to the next frontier at the same time, only one can successfully set $r[v]$ to the current round number by atomic operation \textsc{compare\_and\_swap} (\cref{line:ccbfs_cas}), and the successful vertex will put $v$ to the next frontier. 

\myparagraph{Traversing} 
At the beginning, the algorithm puts all the sources $s\in S$ into the first frontier $\ff_0$.
Then, we visit all vertices by frontiers. 
In each round, we process frontiers in two stages, where the first stage processes vertices and the second stage processes edges. 
In the first stage (\cref{line:batch_edgemap_begin} to \cref{line:batch_edgemap_first}), we first compute the sources that newly visited $u$ by $\Snew \gets \Snext[u]\setminus\Sseen[u]$ (\cref{line:diff_label}). 
Note that the \bitsubset $\Snew$ contains sources whose distances are $i$, which is also $\Ssubset{S}{u}{i-\Sdist{u}}$ (\cref{line: updateSset}). 
Then we update $\Sseen[u]$ to include newly visited vertices (\cref{line:batch_edgemap_first}), and set $\Sdist{u}$ to the current round number if it has not been set yet (\cref{line: update_Sdist}). 
In the second stage (\cref{line:ccbfs_edgemap_begin} to \cref{line:ccbfs_edgemap_end}), 
we process the neighbor vertices  of the current frontier that have not been visited for $d$ times already (\cref{line:ccbfs_cond}). 
For an edge from $u\in \ff_i$ to its neighbor $v$,
we propagate the sources seen so far by $u$, $\Sseen[u]$, to $v$ (\cref{line:ccbfs_fao}).
In general, if any source $s\in S$ visited $u$ in the previous round, $s$ should also visit $v$ in this round,
and should be included in the $\Snext[v]$ for $v$ in this round. 
If the $\Snext[v]$ is changed, which means there are new sources visiting $v$, $v$ should be added to the next frontier. To avoid duplication in the next frontier, only the one that can successfully set 
$r[v]$ to $i$ (\cref{line:ccbfs_cas}) by \textsc{compare\_and\_swap} will put $v$ to the next frontier.  
Note that we can further benefit from the directional optimization that is commonly used in parallel BFS.  
Additional details about the directional optimization are given in \iffullversion{\cref{sec:parallel_BFS}.}\ifconference{our full version paper.}

%by a compare-and-swap. 

\hide{
At most $c+1$ such \bitvector{s} can cover $S$ if $S$ is reachable to $v$.  
We use an array of $d+1$ \bitvector{s} $Q_v$ for each vertex $v$ to maintain such subsets of $S$,  where $Q_v[i]$ stores the vertices in $S$ that visits $v$ in the $\delta_{\min}+i$ round. 
In other words, for $s\in Q_v[i]$, there is path from $s$ to $v$ with length $\delta_{\min}+i$. 
Note that $Q_v[i]$ is not equal to $S_i[v]$, and the relations between $Q_v[i]$ and $S_i[v]$ is that $S_i[v]\subseteq Q_v[i]$ and $Q_v[i]\subseteq \bigcup\limits_{j=0}^{i} S_j[v]$. We will compute $S_i[v]$ in the post processing using $Q_v$. $Q_v[i]$ are initialized as $\emptyset$ for $v\in V$ and $i\in [0,c+1]$. 
We maintain a counter array $\cnt[\cdot]$ to store the number of rounds a vertex has been visited (put in the frontier). It is also a pointer for $Q_v$ that $Q_v[\cnt[v]]$ points to the last element in $Q_v$ that we write. $\cnt[v]$ are initialized as $0$ for all $v\in V$.
We use $\delta[v]$ to maintain the distance of $v$ when it is last processed, which is the round number when it is processed plus one. $\delta[v]$ is initialized as $\infty$ for all $v\in V$.
}

%It first perform \edgemap as we mentioned above to maintain subsets of $S$ and temporary furthest distance from vertex in $S$ to each $v$. (\cref{line:batch_edgemap_begin} to \cref{line:batch_edgemap_end}), then it processes the subsets and temporary distance to compute the desired representation (\cref{line:batch_post_begin} to \cref{line:batch_post_end}).

\hide{
At the beginning, we put vertices in $S$ to the first frontier, and set their distance as 0 and the first subsets of $S$ that reach them as themselves. In round $i$, we will first increase the $cnt$ for vertices in the frontier by one. Then, for all $v\in V$, $Q_v[cnt[i]]$ is point to a new \bitvector that is $\emptyset$, and $Q_v[cnt[i]-1]$ stores the vertices that can reach $v$ in the last round. In the current round, $edge\_f(u,v)$ want to propagate the vertices reach $u$ in the last round to $v$ in the current round by $\textsc{fetch\_and\_or}(\&Q_v[cnt[v]], Q_u[cnt[u]-1])$. If $u$ successfully set any bit in $Q_v[cnt[v]]$, then $u$ will try to set the distance of $v$ to $i+1$ by $\textsc{compare\_and\_swap}$, if it succeeds, $u$ will take the responsibility to put $v$ to the next frontier (\cref{line:edge_begin} to \cref{line:edge_end}). As we mentioned before, $cond\_f(v)$ will check whether a vertex has been visited for $c$ rounds, if not, we will still process it; otherwise, we will not process it anymore. When the frontier become empty, the \edgemap is done.
}

\hide{
\myparagraph{Post Processing} After the main process of \batchBFS{}, we use postprocessing
to obtain the final \compactdis{s} 
$\langle \Ssubset{S}{v}{0..d}, \Sdist{v}\rangle$ for each vertex $v\in V$. 
Recall that at this point, we have obtained $dis[v]$, which is the latest round where $v$ has been visited,
$\Ssubset{Q}{v}{i}$, which is the set of sources that visits $v$ in round $i$,
and $cnt[v]$, which is the number of rounds $v$ has been visited. 
%Therefore, $\delta[v]$ is essentially the distance between $c$ and the farthest source in $S$,
%and $\delta_{\min}[v]$ can be computed by $\delta[v]-cnt[v]+1$. 
Therefore, the first round that visits $v$ (i.e., the shortest distance from any $s\in S$ to $v$) is 
$\Sdist{v}=\dis[v]-\cnt[v]+1$.
Finally, we obtain all $\Ssubset{S}{v}{0..d}$ from $\Ssubset{Q}{v}{0..d}$.
Based on the discussions above, $\Ssubset{S}{v}{i}$ is all sources in $\Ssubset{Q}{v}{i}$ that have not appeared in $\Ssubset{Q}{v}{0..i-1}$,
i.e., round $i$ is the first time that this source visits $v$.
Therefore, we can compute $\Ssubset{S}{v}{i}$ in the for-loop on \cref{line:batch_post_begin}.
We use $\OLD$ to store the union of $\Ssubset{Q}{v}{0..i-1}$. 
Then $\Ssubset{S}{v}{i}$ is the difference of $\Ssubset{Q}{v}{i}$ and $\OLD$. 
Note that if $S$ can reach $v$, 
the last subset $\Ssubset{S}{v}{d}$ should be $S-\OLD$. 
However, if $OLD=\emptyset$, $\Ssubset{S}{v}{d}$ is set to $\empty$ because $S$ is not reachable to $v$.

\hide{Then we do post processing to compute $\delta_{\min}, S_{0,...,c}$ by $\delta$ and $Q_v[\cdot]$. $\delta_{\min}[v]$ is the smallest shortest distance from vertices in $S$ to $v$, and $\delta[v]$ stores the distance when $v$ is processed for the $c$ times, so $\delta_{\min}[v]$ can be easily computed as $\delta[v]-c$. Recall that $Q_v[i]$ includes $S_i[v]$ completely and some other vertices in $\bigcup\limits_{j = 0}^{i-1} S_j[v]$. To recover $S_i[v]$ from $Q_v[i]$, we need to substract the vertices that are in $\bigcup\limits_{j = 0}^{i-1} S_j[v]$ from $Q_v[i]$. We keep an \bitvector $\OLD$ that maintains all the vertices shown so far, and compute $S_i[v]$ by $Q_v[i]-OLD$. Note that if because $S_{0,...,d}[v]$ can cover $S$, so the last subset $S_d[v]$ is computed by $S-\OLD$, but if $\OLD$ is $\emptyset$, $S_d[v]$ is set to $\empty$ because  $S$ is not reachable to $v$.  }

% Note that $Q[v][i]$ may contains vertices that have visited $v$ earlier than the $i+1$ round, so that $Q[v][i]$ is not equal to $S_i[v]$. $S_i[v]$ is computed by $Q[v][i]-\cup_{j<i}{Q[v][j]}$ when $Q$ are finish computed. Besides, the last subset $S_c[v]$ is computed by $S-\cup_{i<c+1}S_i[v]$ when the first $c$ subsets are computed. Suppose the queue supports three operations, $\textsc{insert}$, $\textsc{top}$ and $\textsc{second\_top}$. $\textsc{top}$ will return the last element in the queue, $\textsc{second\_top}$ will return the second last element in the queue. The pseudocode is shown in \cref{alg:batch_BST}. 
}

The efficiency of the algorithm relies on using bit-operations to compute the union and difference of two \bitvector{s}, stated below.

\begin{lemma}\label{lemma:word_cost}
  Given the \bitvector subsets $S_1$ and $S_2$ of a set $S$ with size $k$, we can compute the \bitvector representation of $S_1\cup S_2$, $S_1 \setminus S_2$ in $O(k/w + 1)$ work and $O(\log (k/w)+1)$ span, where $w$ is the word length. 
\end{lemma}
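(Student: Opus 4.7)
The plan is to show the lemma by direct construction: represent each bit-vector subset of a $k$-element universe as an array of $\lceil k/w \rceil$ machine words (padding the final word with zeros if $k$ is not a multiple of $w$), and then implement union and difference word-by-word using the native bitwise instructions of the model.

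First I would unfold the bit-vector representation: bit $j$ of word $\lfloor j/w \rfloor$ stores whether the $j$-th element of $S$ lies in the subset. Under this encoding, set union corresponds exactly to bitwise \textsc{Or} applied per word, and $S_1 \setminus S_2$ corresponds to the per-word expression $S_1 \mathbin{\&} (\mathord{\sim} S_2)$ (bitwise \textsc{And} with the complement). Both \textsc{Or}, \textsc{And}, and bitwise negation are unit-cost word operations in the computational model recalled in \cref{sec:prelim}, so each of the $\lceil k/w \rceil$ output words is produced with $O(1)$ work.

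Next I would account the cost. Summing over the $\lceil k/w \rceil = O(k/w + 1)$ words yields total work $O(k/w + 1)$. For the span, I would perform the per-word computation in a parallel for-loop, which in the binary fork-join model is realized by a balanced recursion of depth $O(\log(k/w) + 1)$; since each leaf does $O(1)$ work, the overall span is $O(\log(k/w) + 1)$. The $+1$ terms handle the corner case $k \le w$, where the whole bit-vector fits in a single word and both operations reduce to one constant-time instruction.

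I do not expect any real obstacle here; the only subtlety worth mentioning is the treatment of the possibly incomplete final word (the unused high-order bits must be kept zero, e.g.\ by masking after the \textsc{Or} or by ensuring both inputs are already zero-padded), which does not change the asymptotic bounds. With the padding invariant maintained, both the work and span claims follow immediately from the per-word analysis.
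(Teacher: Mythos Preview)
Your proposal is correct and matches the paper's own proof: both represent the bit-vector as $\lceil k/w\rceil$ words, apply the per-word bitwise \textsc{Or}/\textsc{And}/\textsc{Not} operations in a parallel for-loop, and read off the $O(k/w+1)$ work and $O(\log(k/w)+1)$ span directly. Your treatment of the final-word padding is additional care the paper does not spell out, but it does not change the argument.
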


\begin{proof}
  A \bitvector with $k$ bits needs $\lceil k/w \rceil$ words. These $\lceil k/w \rceil$ words can be processed in parallel.  
  With the constant cost for bitwise or/not operations for a single word, the work and span for $S_1\cup S_2$,  $S_1\setminus S_2$ are $O(k/w+1)$ and $O(\log (k/w)+1)$.
\end{proof}

We now show the cost analysis of the \batchBFS{} algorithm. 

\begin{theorem}\label{theorem:batchBFScost}
Given a set $S$ of $k$ vertices with diameter $d$, we can compute the 
%representation of a shortest distance 
\compactdis{} 
from $S$ to every vertex in $V$ in $O(dm(k/w+1))$ work and $O((D+d)\log n)$ span.
\end{theorem}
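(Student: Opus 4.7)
The plan is to bound the work and span separately by carefully accounting for how many times each vertex is processed and how expensive each processing step is.

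For the work bound, the key observation is \cref{corollary:distance_range}: for any reachable vertex $v$, the distances from sources in $S$ to $v$ lie in $[\Sdist{v}, \Sdist{v}+d]$, so new sources can only first reach $v$ in one of the $d+1$ rounds $\Sdist{v},\Sdist{v}+1,\ldots,\Sdist{v}+d$. The condition check on \cref{line:ccbfs_cond} ($i-\Sdist{v}<d$) enforces that $v$ is not propagated across after round $\Sdist{v}+d$, so $v$ is placed into the frontier at most $d+1$ times. Each time $v$ is in the frontier, the algorithm inspects all its edges, and each edge $(u,v)$ incurs a constant number of \bitvector{} operations (the set difference on \cref{line:diff_label}, the union on \cref{line:batch_edgemap_first}, and the \textsc{Fetch\_And\_Or} on \cref{line:ccbfs_fao}), each costing $O(k/w+1)$ by \cref{lemma:word_cost}. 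Summing over all vertices and all their frontier appearances yields total work $O\!\left(d \cdot \sum_{v} \deg(v) \cdot (k/w+1)\right)=O(dm(k/w+1))$.

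For the span bound, I would first argue that the outer while-loop executes at most $D+d$ rounds: for any reachable vertex $v$, $\Sdist{v}\le D$, and $v$ is propagated across only through round $\Sdist{v}+d\le D+d$, so no vertex contributes to a frontier after round $D+d$. Within each round, both stages are nested parallel for-loops over the frontier and its neighbors; these are implemented by binary recursive forking with $O(\log n)$ span, and each iteration performs \bitvector{} operations whose span is $O(\log(k/w)+1)=O(\log n)$ by \cref{lemma:word_cost}. The \textsc{compare\_and\_swap} and \textsc{Fetch\_And\_Or} are $O(1)$ atomics. Multiplying the per-round span $O(\log n)$ by the $O(D+d)$ rounds gives total span $O((D+d)\log n)$.

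The main obstacle is rigorously establishing the $d+1$ upper bound on frontier appearances per vertex, which is what drives the work factor of $d$ rather than something like $D$. This requires a short invariant argument: whenever a source $s\in S$ first reaches $v$ in round $i$, the bit for $s$ gets set in $\Snext[v]$ (triggering the \textsc{Fetch\_And\_Or} to succeed and the \textsc{CAS} to place $v$ into $\ff_{i}$); conversely, once round $\Sdist{v}+d$ has passed, \cref{corollary:distance_range} guarantees no source in $S$ can first reach $v$ at a strictly larger distance, and the predicate on \cref{line:ccbfs_cond} correctly suppresses further propagation. A secondary subtlety is that the parallel \textsc{Fetch\_And\_Or} / \textsc{CAS} pattern ensures $v$ is inserted into $\ff_{i+1}$ at most once per round even under contention, so the counts match the sequential reasoning. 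Once these invariants are in place, the work and span bounds follow by summation as sketched.
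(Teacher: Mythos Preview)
Your proposal is correct and follows essentially the same approach as the paper: bound work by noting each vertex appears in the frontier $O(d)$ times and each processing step costs $O(k/w+1)$ via \cref{lemma:word_cost}, and bound span by counting $O(D+d)$ rounds each of span $O(\log n)$. If anything, you are slightly more careful than the paper in justifying the $d+1$ frontier-appearance bound via \cref{corollary:distance_range} and the predicate on \cref{line:ccbfs_cond}, and in noting the \textsc{CAS} ensures at most one insertion per round.
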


\begin{proof} 
  We will analyze the work and span for the traversal stage, which dominates the cost of the initialization stage. The traversal consists of two phases: the first phase processes vertices in the frontier, and the second phase processes edges from the frontier. The cost of processing a single edge and a vertex is asymptotically the same, as it involves applying a constant number of set operations, which results in $O(k/w+1)$ work and $O(\log(k/w)+1)$ span, as described in \cref{lemma:word_cost}.
  Assuming $n < m$, the cost of traversal is primarily determined by edge processing. Therefore, the rest of the proof will focus on analyzing the cost of edge processing. 
    
  For the work, since each vertex is in the frontier for at most $d$ times, each edge is processed at most $d$ times, leading to a total work of $O(dm(k/w+1))$.
  For the span, recall that $D$ is the diameter of the graph, and there are at most $D+d$ rounds in the traversal stage. The span for each round is $O(\log(k/w)+1+\log n)$, which accounts for the cost of generating $O(m)$ parallel tasks (costing $O(\log n)$ in the binary fork-join model) and the cost of processing a single edge (costing $O(\log(k/w)+1)$ as shown in \cref{lemma:word_cost}). Since $k/w$ is smaller than $n$, the span for each round simplifies to $O(\log n)$. Thus, the total span for the $D+d$ rounds of the traversal stage is $O((D+d)\log n)$. Therefore, the total work and span for our parallel \batchBFS{} are $O(dm(k/w +1))$ and $O((D+d)\log n)$, respectively.
\end{proof}

If we take $k=\Theta(w)$, such that each \bitvector{} fits within a constant number of words, the work simplifies to $O(dm)$ and the span remains $O((D+d)\log n)$, which matches the work and span of a single BFS. Since $w=\Omega(\log n)$, this means that we can compute $O(\log n)$ more BFSs with asymptotically the same cost.

\hide{
In round $i$, $u$ records all the sources in $S$ that reach it before round $i$ by a \bitvector{} $\Sseen[u]$. 

When $u$ visits its neighbor $v$, $u$ propagates this \bitvector{} to $v$ by taking bitwise $\textsc{OR}$ operation with the \bitsubset{} $\Snext[v]$ representing the vertices reaching $v$ in round $i$ (\cref{line:ccbfs_fao}). 
If $\Sseen[v]$ stores all the sources that visited $v$ before round $i$, and $\Snext[v]$ stores the sources visiting $v$ including round $i$, then the subset $\Snext[v]-\Sseen[v]$ contains the vertices that visit $v$ for the first time and their distance to $v$ is $i$ (\cref{alg:batch_BST}: \cref{line:diff_label}).  

According to \cref{corollary:distance_range}:  all the vertices in $S$ will visit $v$ at least once during round $\Sdist{v}$ to round $\Sdist{v}+d$,
in other words, all the vertices will be put into the frontier for at most $d+1$ times. 
Therefore, we need to know the number of times $v$ has been put in the frontier. 
When $v$ has been put in the frontier for $d+1$ times, since all sources in $S$ must have already visited $v$, we do not need to process $v$ anymore. 
Otherwise, we will process $v$ and put it to the next frontier since other sources in $S$ may visit $v$ in the future. 
When $v$ gets the newly visited sources represented in \bitsubset{}, we needs to know the index in $\Ssubset{S}{v}{0..d}$ to store them. By the definition, the index is the distance larger than $\Sdist{v}$, which can be computed by $i-\Sdist{v}$. $\Sdist{v}$ is set to the round number when $v$ is put in the frontier for the first time.
The pseudocode of our \batchBFS is shown in \cref{alg:batch_BST} and it has two stages: initialization, and traversing. 
}

%We will discuss the technical contributions 
\hide{
\letong{
Akiba et. al.~\cite{akiba2013fast} used this idea in exact distance oracles to save index space and construction time. However, their \ccbfs{} needs to classify all the edges to either sibling edges or child edges before propagate the $S_{-1}, S_{0}$ subset labels, which first cost and $O(E)$ extra space to store the edge lists, and also make it harder to extend to more general clusters. Chan propose an theoretical \ccbfs that applied to more general clusters, but he did not go into any details of the implementation. 
Our \ccbfs implicit finding the subset labels by the arrive order of soruces, which avoid $O(E)$ auxilarate memory usage, can be easily extend to $d$-hop stars ($d>1$) and even more general clusters that not necessary to be a star. 
We further take the advantage of \mapsparse (forward) and \mapdense (backward) that are commonly used in most state-of-art parallel BFS algorithms, and parallel it to take the advantage of thread-level parallelism.
}
}
\hide{
%We combine the bitwise compression idea with \edgemap framework and put extra effort to extend batches to more general restrictions, easy to implement by \edgemap framework and make it efficient. 
% Conceptually, \batchBFS is not hard to parallelize, but designing a parallel algorithm that is efficient, simple to implemented by existing graph processing tools, and can be extended to more general batches, is challenging. 
% Both of them choose a center vertex and $O(\log n)$ neighbors as a source batch. 
Directly apply existing work does not give high performance in parallel.  
The algorithm in \cite{akiba2013fast} always chooses a center vertex and $O(\log n)$ neighbors as batch. Overall, it performs a single BFS frontier by frontier from the center vertex. This BFS has two phases in each round: classifying the edges into two categories by whether the two endpoints are in the same frontier and processing the two class of edges separately. 
In order to classify the edges into two categories and process them in parallel, we either need $O(m)$ extra memory or re-visiting edges, which is the main cost of BFS.
% Because it has two phases to process edges, naively parallellizing the algorithm will cost the span to be twice as a single parallel BFS. 
The algorithm in \cite{chan2012all} also choose a center vertex, and it extends the source batch to be $c$-hop neighbors. But it did not provide an efficient implementation of the algorithm. In order to parallel \batchBFS efficiently and extend to more general batches, we give a new abstraction of this algorithm and combine with \edgemap to reduce the implementation effort with good performance.  In the following sections, we will first introduce our bitwise-representation based on our abstraction, and then give our parallel \batchBFS algorithm.
}

% Bitwise-Batch BFS compress a batch of vertices into a bit vector and do BFS from them. Compared to naively running BFS from each sources individually, it costs sublinear time and space. 

\hide{It takes the advantage of "word tricks". A subset of $\{1,...,k\}$ can be represented as a bit vector and stored in a sequence of $O(\lceil k/b \rceil)$ words, where $b$ is the word length and $b = \lfloor \alpha \log n \rfloor$ for some suitable constant $\alpha$. Given the bit-vector representations of sets $S_1,S_2,...$, the cost of computing the bit-vector representations of $S_1 \cup S_2$, $S_1 \cap S_2$ and $S_1 - S_2$ is $O(k/log n + 1)$ time.}

\section{Applications}\label{sec:applications}
%In this section, we show that \batchBFS (\ccbfs) can efficiently answer distance queries on unweighted graphs, 
%and thus accelerate graph analytic applications relevant to distance queries. 
%especially in the context of \emph{distance oracles}. 
In this section, we show two applications that can benefit from our new parallel \BatchBFS algorithm. %can efficiently answer distance queries on unweighted graphs, 
Both applications are distance oracles (DO) for unweighted graphs. 
A \defn{distance oracle (DO)} is an index designed to
answer the shortest distances between two vertices on a graph. 
%that can help answering the shortest distance between two query vertices quickly. Although we can compute the distance for each query by using a single BFS or Dijkstra's algorithm, they take more than a second for large graphs, which is too slow to use as a building block of these applications that requires response time to be microseconds. In order to answer distance queries much more quickly, we can precompute some distances and store them in index. 
Although such a query can always be answered by computing the distance on the fly (e.g., running a BFS from one of the query vertices),
this can be inefficient for applications requiring low latency or requesting multiple queries. 
A distance oracle aims to store information generated during preprocessing 
to accelerate the distance queries. 

From the perspective of accuracy, distance oracles can be classified into approximate distance oracles (ADO) and exact distance oracles (EDO). 
ADOs may not answer the accurate distance but are cheaper in preprocessing time, query time, and index space, and thus scale to large graphs. 
EDOs always give the exact distance but can be more expensive to compute. 
Therefore, EDOs are usually used in applications that are on small graphs but more sensitive to accuracy. 
%Our \batchBFS can be applied to both ADO and EDO. %to take advantage of parallelism and further accelerate the performance. 
In this section, we will introduce how to apply our \batchBFS to the two existing distance oracles: 
a 2-hop labeling-based EDO and a landmark labeling-based ADO. 
Since both applications work on undirected graphs, we assume the graph to be undirected in this section. Our \batchBFS{} algorithm works for general directed graphs. 

\subsection{An EDO based on 2-Hop Labeling} 
Here, we consider the EDO constructed by Akiba et al.~\cite{akiba2012shortest}, called \emph{2-hop labeling}, and we apply the idea of \ccbfs{} to this algorithm. 
Their original algorithm is sequential, and we refer to it as the \emph{AIY} algorithm. 
We can replace the component for 2-hop labeling in their algorithm with our parallel \ccbfs{} to achieve better performance. 
\iffullversion{For completeness, we describe their algorithm in \cref{sec:2hopLL}. }\ifconference{For completeness, we describe their algorithm in the full version paper. }
In the experiments, we compared our parallel \ccbfs{} with the component of \ccbfs{} in their sequential code.\iffullversion{We also parallelized the entire algorithm for 2-hop labeling using our parallel \ccbfs{}, and present a comparison in \cref{sec:exp:edo}.}\ifconference{We also parallelized the entire algorithm for 2-hop labeling using our parallel \ccbfs{}, and present a comparison in our full version paper.}

\hide{
Recall that a \defn{distance query} gives the shortest distance between two vertices in a graph, which is one of the most fundamental operations on graphs. 
For example, on social networks, distance between two users indicates the closeness, and is widely used in socially-sensitive search to help users to find more related users or contents \cite{}, or to detect influential people and communities \cite{}. On web graphs, distances between web pages usually indicate their relevance, and vertices with smaller distances can be web pages more related to the currently visiting web page \cite{}.
Some other applications include ..\cite{}.

Since we cannot exhaust all applications, in this paper, we focus on applications on distance oracles, and show that CC-BFS can be an effective routine in accelerating them.
Specifically, our \batchBFS can accelerate distance oracle-based applications. \defn{Distance oracles} are prestored distance index that can help answering the shortest distance between two query vertices quickly. Although we can compute the distance for each query by using a single BFS or Dijkstra's algorithm, they take more than a second for large graphs, which is too slow to use as a building block of these applications that requires response time to be microseconds. In order to answer distance queries much more quickly, we can precompute some distances and store them in index. 
% In particular, applications such as socially-sensitive search or context-aware search should have low latency since they involve real-time interactions between users, while they need distances between a number of pairs of vertices to rank items for each search query. Therefore, distance queries should be answered much more quickly, say, microseconds. 
% Generally, in order to answer distance queries with short response time, people use some space to store distance.  
From the perspective of accuracy, distance oracles can be classified into approximate distance oracles (ADO) and exact distance oracles (EDO). Our \batchBFS can be applied to both ADO and EDO to bring in parallelism and further accelerate the performance. 
In this section, we will introduce how to apply our \batchBFS to the two existing distance oracles. More specifically, we introduce a landmark labeling-based ADO in \cref{sec:approxLL} and a 2-hop labeling-based EDO in \cref{sec:2hopLL}.
% , and an all pairs shortest distance algorithm (a specifical type of EDO) in \cref{sec:APSP}.
}

%\subsection{An ADO based on Landmark Labeling}
%\label{sec:approxLL} 

\begin{algorithm}[t]
  \small
 \caption{Framework of Landmark Labeling\label{alg:simpleLL}}
%  \KwIn{A graph $G=(V,E)$ and a list of landmarks $s\in S$}
 \SetKwFor{parForEach}{parallel\_for\_each}{do}{endfor}
 \SetKwInOut{Maintains}{Maintains}
 \DontPrintSemicolon
 The algorithm maintains $L[\cdot][\cdot]$ as the index with size $n\times |S|$. 
 $L[v][i]$ is the distance between vertex $v$ and landmark $h_i$, initialized as $\infty$\\
 \myfunc{\upshape\textsc{Construct\_Index}$(G,H)$\tcp*[f]{$G=(V,E)$}}{
  % \comment{A $n\times |S|$ size array, $\delta[v][i]$ stores the shortest distance between vertex $v$ and landmark $s_i\in S$}\\
  \tcp{Each $h_i\in H$ is a vertex in the plain LL, and is a cluster in \ccbfs-based LL}
  \For{$h_i \in H$ }{
  \tcp{In our algorithm, we replace BFS with \ccbfs{}}
    $t[1..n] \gets \textsc{BFS}(G,h_i)$  \\
    \lForEach{$v\in V$}{
      $L[v][i]\gets t[v]$
    }
  }
  \Return $L$
 }
 \myfunc{\upshape\textsc{Query}$(u,v)$}{
  $ans \gets\infty$ \tcp*[f]{the answer of the query}\\
  \For{$i\gets 0$ to $|H|-1$}{
  \tcp{Our algorithm computes the shortest distance via all vertices in all clusters}
    $\dis \gets L[u][i]+L[v][i]$ \\
    $\ans \gets \min(\ans, \dis)$
  }
  \Return $\ans$
 }
\end{algorithm} 

% \input{figs_algs/alg_LL_cluster.tex}

%Exact distance oracles usually need a large index to recover the exact answer for all vertex pairs, so they are usually not able to run on large graphs because of the large memory usage and construction time.  
%Approximate distance oracles sacrifice accuracy for smaller indexes, so that they can run on large graphs. 
\subsection{An ADO based on Landmark Labeling} \label{sec:approxLL} 
%Our first application is an ADO based on landmark labeling~\cite{tang2003virtual,vieira2007efficient,gubichev2010fast,tretyakov2011fast,qiao2012approximate}. 
In this paper, we mainly focus on this application that can benefit from \ccbfs{}, which we believe is new.
The application is an ADO based on landmark labeling~\cite{tang2003virtual,vieira2007efficient,gubichev2010fast,tretyakov2011fast,qiao2012approximate}. 
As mentioned, ADOs sacrifice accuracy to get lower running time and index space.
Here, as is common, we assume a one-sided error, such that the distance reported by the ADO cannot be smaller than the actual distance.
%To measure the loss in accuracy, we use \emp{distortion} $\psi$ as the expected ratio that the answer from an ADO  (denoted as $query(u,v)$ for $u,v\in V$)
%is off the true distance $\delta(u,v)$, i.e., $\psi = E_{\forall u,v\in V} [({query(u,v)-\delta(u,v)})/{\delta(u,v)}$. 
To measure the loss in accuracy, for a distance query on $u,v\in V$, 
we use \emp{distortion} $\psi(u,v)$ as the ratio between the answer from an ADO  (denoted as $\query(u,v)$)
over true distance $\delta(u,v)$, i.e., $\psi = \query(u,v)/\delta(u,v)$. 
We define the distortion of an ADO as the average distortion over all pairs.  
% \guy{This only makes sense if the error is one sided, and hence I added the one sided assumption above.}
As $\psi$ is always greater than 1, for simplicity, we use $\epsilon$ to describe the distortion, where $1+\epsilon=\psi$.
%how far the answer of a query (denoted as $query(u,v)$) is larger than the true distance $\delta(u,v)$ in expectation, i.e. $\psi = E_{\forall u,v\in V} \frac{query(u,v)-\delta(u,v)}{\delta(u,v)}$. 

%The most straigh-forward approach is the \defn{landmark-based approach}\cite{}. 
\emph{Landmark labeling (LL)} is one of the widely-used approaches of ADOs and probably the simplest. 
The basic idea is to select a subset $H$ of vertices as landmarks, and precompute the distances between each landmark $h\in H$ and all the vertices $u\in V$. 
When the distance between two vertices, $u$ and $v$, is queried, $\mbox{query}(u,v)$ answers the minimum $\delta(u,h)+\delta(h,v)$ over all the landmarks $h\in H$ as an estimation.
We show the high-level idea of LL in \cref{alg:simpleLL}. 
%and the estimation is always one sidded. % (i.e., no smaller than the true distance). 
%Although the landmark-based approach is simple, its low distortion on complex graphs makes it widely used \cite{}. 

% Generally, the distortion for each query depends on whether actual shortest paths pass nearby landmarks. 
Generally, the distortion of a query depends on how far the actual shortest paths are from their nearest landmark. 
If the actual shortest paths contain any landmark vertex, the query answer is equal to the true distance. 
Therefore, adding more landmarks can decrease the distortion, but it also needs more space and time to store and compute the index.
% This property motivates us to apply our \batchBFS to landmark labeling. 
In this paper, we propose to use \ccbfs{} to optimize LL. 
In particular, we select clusters of vertices as landmarks instead of single vertices.
As discussed, a \ccbfs{} on a cluster of size $O(\log n)$ 
has costs (both time and space)  asymptotically the same as running BFS on one vertex. 
In this way, we can select $w$ times more landmarks with asymptotically the same cost as the plain LL. %according to \cref{theorem:batchBFScost}. 
We note that the quality of the $w$ landmarks in one cluster may not be as good as choosing them independently, 
since they are highly correlated. 
%in a cluster are correlated so they are not as good as the same amount of independent landmarks. 
However, we experimentally observe that with the same memory limit, using \batchBFS in landmark LL significantly improves the performance both
in distortion and preprocessing time (see \cref{sec:exp_appx}). 
% Therefore, with the same memory usage for the index, applying \batchBFS may increase the precision, as well as accelerate index construction. 
% Another factor people found affecting the precision is the selecting vertices sit more sentral in the graph is better than select random vertices. A typical way to select landmarks is to prioritize vertices with higher degrees and choose top $k$ as landmarks. 

%In the following of this section, we will first introduce how to apply our \batchBFS to landmark labeling in two steps: selecting landmarks and answering queries. 
%Then we will introduce two optimizations to improve the query in time and distortion.
% Therefore, by selecting central vertices as landmarks, the accuracy of estimates becomes much better than selecting random landmarks \cite{}. Typically, people rank the vertices by certain criterions, and choose the top ranking vertices as landmarks. Practical criterions including vertex degrees, estimated centralities and xxx \cite{}. 

%\myparagraph{Applying \batchBFS to Landmark Labeling.} 
% To applying \batchBFS to landmark labeling, we first need to figure out how to select landmarks in clusters, and then apply the \batchBFS to the selected clusters and get the shortest distances in the compact representation mentioned in \cref{sec:bitwise}. Finally, we need to figure out how to answer the queries from the compact distances. 
To apply \ccbfs{} to LL, we need two subroutines: 
1) selecting landmarks in clusters with low distortion, and 
2) answering the queries from the \compactdis{s} computed by \ccbfs{}.

\myparagraph{Selecting Landmarks in Clusters.} The landmark selection is crucial in LL as it affects the query quality. 
%Selecting vertices siting more in the central of a graph is better than selecting vertices randomly, because more shortest paths passing through central vertices than random vertices. 
A typical way is to prioritize vertices with high degrees~\cite{li2019scaling,Potamias09,akiba2013fast}. 
%We also first choose the vertex with the highest degree as a landmark. 
We also employ this approach. 
In this step, we aim to identify clusters with a specified size $k=w$ and diameter $d$. Within these constraints, the selection of landmarks should prioritize vertices with higher degrees.
To find a cluster, we first identify the vertex with the highest degree. 
Among all its $\lfloor d/2 \rfloor$-hop neighbors, we then select $k-1$ additional vertices with the highest degrees. 
For instance, if $k=64$ and $d=2$, we select the vertex with the highest degree and $63$ of its neighbors with the next highest degrees to create the first cluster.
%We start with choosing the vertex with the highest degree along with its $w-1$ neighbors as the first cluster of landmarks. 
%If the vertex has more than $w-1$ neighbors, we will choose those with the highest degrees. 
%In particular, to enable some query optimizations (which will be introduced below), 
%we always use a vertex and up to $w-1$ of its neighbors as a cluster, where $w$ is the word length. 
%We start with choosing the vertex with the highest degree along with its $w-1$ neighbors as the first cluster of landmarks. 
%If the vertex has more than $w-1$ neighbors, we will choose those with the highest degrees. 
%\letong{Mention how to select 2-hop neighbors when there is not enough neighbors.}
Once we select a cluster, we will mark all the vertices in the cluster, and will not select them again. 
We repeat this process until we select $r$ clusters (giving $rw$ landmarks in total).
%More generally, the user can prioritize the vertices with other criteria. 
One can also select landmarks using other heuristics~\cite{Potamias09,li2017experimental}. 
% No matter what criteria are used, as long as an order of vertices is given, we can always first choose vertices and their neighbors with higher priorities. 
%After selecting clusters, applying \batchBFS to the selected clusters are straightforward. 
After selecting clusters, we apply \ccbfs{} to all selected clusters using \cref{alg:c-bfs}. %the algorithm in \cref{sec:bitwise}. 
By doing this, we obtain the \compactdis{s} between each vertex and each cluster.  
%The shortest distances from the clusters are computed in the \compactdis{} mentioned in \cref{sec:bitwise}. 

\myparagraph{Answering Queries with Clustered Landmarks.} To extend the original landmark idea to work with \ccbfs{}, 
we need to show how to use the \compactdis{s} to answer the queries.  
A query answers the shortest distances between two vertices through any landmark. 
When all landmarks are independent vertices, $\query(u,v)$ returns the smallest $\delta(u,l)+\delta(l,v)$ over all landmark vertices $l\in L$. 
%When the landmark unit is a single vertex,  $query(u,v)$ returns the smallest $\delta(u,l)+\delta(l,v)$ over all landmark vertices $l\in L$. 
In our case, the landmarks are grouped into clusters. 
Therefore, we first compute, within each cluster $S$, the smallest value $\delta(u,s)+\delta(s,v)$ for all $s\in S$. 
Then we will take the minimum among all clusters. 
%$query(u,v)$ returns the smallest distances passing through a cluster $\min_{s\in S}(\delta(u,s)+\delta(s,v))$ over all clusters $S\in L$. 
%we will take the minimum value among all clusters,

The problem boils down to finding $\delta(u,s)+\delta(s,v)$ for all sources $s$ in a given cluster $S$. 
Recall that for each cluster $S$, both vertices $u$ and~$v$ have obtained their \compactdis{} from \ccbfs{}, 
denoted as $\langle S_u[0..d],\Sdist{u}  \rangle$ and $\langle S_v[0..d],\Sdist{v} \rangle$. 
The possible shortest distances passing through $S$ is in the range $[\Sdist{u}+\Sdist{v}, \Sdist{u}+\Sdist{v}+2d]$.  For two \bitvector, $\Ssubset{S}{u}{i}$ and $\Ssubset{S}{v}{j}$, if their intersection is not empty, it means there is a path connecting $u$ and $v$ with distance $\Sdist{u}+\Sdist{v}+i+j$ passing through any source vertex in the intersection.  
We check the intersections from the lowest possible distances (e.g., $\Ssubset{S}{u}{0} \cap \Ssubset{S}{v}{0}$) to higher possible distances, until we find a nonempty intersection, and return their distance sum. The complexity of the query grows in a quadratic manner as $d$ grows. For the simplest case, $d=2$, we only need to check three intersections for each cluster to get the answer.  

\hide{
We will use an optimization from~\cite{akiba2013fast}. 
In particular, we always choose a cluster as a vertex and its neighbors, forming a star-like cluster. 
Let the center of the cluster be $s^*$.
Note that we only need to find the lowest $\delta(u,s)+\delta(s,v)$ among all $s\in S$, which is upper bounded by
the value $\delta(u,s^*)+\delta(s^*,v)$ using the center $s^*$. 
Since $s^*$ is the center of the star, for any $s\in S$ and $v\in V$, the distance $\delta(s,v)$ and $\delta(s^*,v)$
differ by at most 1. 
This means that we do not need to process any source $s\in S$ with $\delta(s,v)>\delta(s^*,v)$---even though $s$ can be closer to $u$ than $s^*$ (by at most one), the total distance to $u$ and $v$ from $s$
will not be better than $s^*$, and therefore we can skip them. 
Based on this idea, we only store two \bitvector{s} in a \compactdis{}: the one containing $s^*$, and the one that is one closer than $s^*$. \letong{extend this optimization to more general clusters. The center of the star $s^*$ provides an distance, we first find the subsets of $u$ and $v$ that $s^*$ appears in, and then testing the other combinations that may generate distances smaller than $\delta(u,s^*)+\delta(s^*,v)$}.
}

% We can easily apply this optimization to our \batchBFS mainly by changing the \condf{$(v)$}: 
% Instead of checking whether $v$ has been added to the frontier for $d$ times, 
% we change \condf{$(v)$} to check whether $s^*$ has visited $v$. 
% In this way, once $s^*$ reaches $v$, 
% we no longer need to compute the distances from other sources to $v$. 

We have shown another optimization, bidirectional searching, for answering queries that can reduce distortion without much more overhead in querying time. \ifconference{Since this optimization is independent to \ccbfs itself, due to the page limit, we put the description and experiments in our full version paper.}
\iffullversion{Since this optimization is independent with \ccbfs itself, due to the page limit, we put both the description and experiments in \cref{sec:biBFS}.}
% Then, $\delta(v)$ stores distance when the last time $v$ is in the frontier, which is $\delta(s^*, v)$. We only need to store $S_0[v]$ and $S_1[v]$ for each vertex $v\in V$, where $S_1[v]$ stores the subset of $S$ that has the same distance to $v$ as $\delta(s^*, v)$ and $S_0[v]$ stores the subset of $S$ that has distance shorter than $\delta(s^*,v)$.  
%Certainly, there are more details in the implementation to maintain the correctness of the boundary conditions, but overall, it is simple to apply this optimization to our \batchBFS framework. 

\hide{
vertex $u$ and $v$ store the distances from $S$ to them by $\langle S_0[u], ... , S_c[u], \delta(u) \rangle$ and $\langle S_0[v], ..., S_c[v], \delta(u) \rangle$. To answer the shortest distance between $u,v$ through a path passing through any vertex in $S$, we take the intersection on all the combination of their subsets, e.g. $S_i[u]\cap S_j[v], \forall i,j\leq c$. The distance of paths passing through the vertex in intersection $S_i[u]\cap S_j[v]$ is $\delta(u)+\delta(v)+i+j$. The non empty intersection with the smallest distance is the shortest distance passing through $S$. 
% The number of combinations is $O(c^2)$, but the optimization mentioned below can reduce the number of combinations need to be check to a quarter of the original. 
To reduce the cost for query, we use an optimization from \cite{akiba2013fast}.
We will skip the details here and introduce the implementation of query in the following paragraphs. 
}

% We can apply our \batchBFS by choosing a \batch as a landmark instead of single vertex. When query the shortest distance between two vertices, it answers the minimum distance of paths passing over all the vertices in selected \batch{s}.

\hide{
\myparagraph{Optimizations for the $query$ function.} We introduce two optimizations for answering queries. The first one reduces the number of interactions need to check for answering a query, thus reduce the time. The second one is to deal with close pairs whose distances are not correctly answered by landmarks, thus reduce the distortion.

Note that, answering a query does not need to know the distances passing through all the vertices in a \batch, but only the smallest distances passing any vertex in the \batch. Therefore, the algorithm in \cite{akiba2013fast} does not store the distances from all the vertices in a cluster to any vertex $v$. 
The idea is to use a vertex $s^*$ and its neighbors as a \batch $S$, then use the path between $u$ and $v$ passing throuth $s^*$ as an upper estimation for the shortest distance passing through $S$. 
% They select a vertex $s^*$ in each \batch $S$ as a center. 
For two vertices $u$ and $v$, the distance $\delta(u,s^*)+\delta(s^*,v)$ is no larger than the shortest distances passing any vertices in $S$, i.e. $\min_{s\in S} \delta(u,s)+\delta(s,v)\leq \delta(u,s^*)+\delta(s^*,v)$.  
Therefore, for each vertex $v$, they only store the distances from vertices in $S$ whose distance to $v$ is no larger than $\delta(s^*,v)$.
% $\{s\in S | \delta(s,v) \leq \delta(s^*,v)\}$.  
% Note that this inequality applies for any $s^*$ in $S$. They choose a vertex and its neighbors as a \batch, so they set the vertex as the center vertex $s^*$. 
Besides, the distances from $s^*$ to other vertices in $S$ are only one, so the distance from any vertex in $S$ to $v$ are in range $[\delta(s^*,v)-1, \delta(s^*,v)+1]$. We only need to store two subsets of $S$, one containing $s^*$ ($S_1[v]$), and potentially the one containing vertices with distance $\delta(s^*,v)-1$ ($S_0[v]$).  Together with the distance $\delta(s^*, v)$, we can answer the shortest distance between any two vertices passing through the \batch $S$ by checking only three intersections: $S_0[u]\cap S_0[u]$, $S_0[u]\cap S_1[v]$ and $S_1[u]\cap S_0[v]$. More generally, if the center vertex $s^*$ has distances to other $s\in S$ no more than $c/2$, it can reduce the number of subsets to store from $c+1$ to $c/2+1$, and thus reduce the number of combinations of subsets from two vertices to one quarter. 

We can easily apply this optimization to our \batchBFS mainly by changing the $cond\_f(v)$. Instead set the $cond\_f(v)$ as whether $v$ has been added to the frontier for $c$ times, we change it to be whether $s^*$ has visited $v$. 
% Then, $\delta(v)$ stores distance when the last time $v$ is in the frontier, which is $\delta(s^*, v)$. We only need to store $S_0[v]$ and $S_1[v]$ for each vertex $v\in V$, where $S_1[v]$ stores the subset of $S$ that has the same distance to $v$ as $\delta(s^*, v)$ and $S_0[v]$ stores the subset of $S$ that has distance shorter than $\delta(s^*,v)$.  
Certainly, there are more details in the implementation to maintain the correctness of the boundary conditions, but overall, it is simple to apply this optimization to our \batchBFS framework. 
}

\section{Experiments}

% \begin{table}[htbp]
%   \centering
%   \small
%   \input{figs_algs/table0.tex}
%   \label{table:graph_info}
%   \caption{
%     The graph information.
%   }
% \end{table}

\myparagraph{Setup.}
We run our experiments on a 96-core (192 hyperthreads) machine with four Intel Xeon Gold 6252 CPUs and 1.5 TB of main memory.  We implemented all algorithms in C++ using ParlayLib~\cite{blelloch2020parlaylib} for fork-join parallelism and parallel primitives (e.g., sorting).  
We use \texttt{numactl -i all} for parallel tests to interleave the memory pages across CPUs in a round-robin fashion.

We tested 18 undirected graphs, which are either social or web graphs with low diameters. 
Graph information is given in \cref{table:microbenchmark}. 
All graphs are from commonly used open-source graph datasets~\cite{leskovec2014snap,BoVWFI,Boldi-2011-layered,nr}. 
When comparing the average running times or speedups across all the graphs, we use the geometric mean. 

\myparagraph{Baseline Algorithms.} We compare our algorithm to two existing implementations: 
1) \defn{Ligra}, the parallel BFS in the {Ligra}~\cite{shun2013ligra} library (only using thread-level parallelism),
and 2) \defn{AIY}, which is the \ccbfs{} component from Akiba et al.~\cite{akiba2012shortest} (sequential, only using bit-level parallelism). 
We also compared AIY for the 2-hop distance oracle as one of the applications. 
%The idea of the parallel BFS algorithm implemented by \edgemap was first proposed by Ligra~\cite{shun2013ligra}. In our paper, we use \defn{Ligra} to refer to the Ligra BFS algorithm implemented in the graph library \parlay~\cite{blelloch2020parlaylib}. In the exact 2-hop distance oracle, we compared to the baseline implementation in \cite{akiba2013fast}.

%We tested on 18 undirected graphs, which are either social or web graphs with low diameters (so-called ``scale-free networks''). 
%Basic information on graphs is given in \cref{table:microbenchmark}. 
%All graphs are from commonly used open-source graph datasets~\cite{leskovec2014snap,BoVWFI,Boldi-2011-layered,nr}. 
%When comparing the average running times or speedups across all the graphs, we use the geometric mean. 

%\myparagraph{Baseline Algorithms.} The idea of the parallel BFS algorithm implemented by \edgemap was first proposed by Ligra~\cite{shun2013ligra}. In our paper, we use \defn{Ligra} to refer to the Ligra BFS algorithm implemented in the graph library \parlay~\cite{blelloch2020parlaylib}. In the exact 2-hop distance oracle, we compared to the baseline implementation in \cite{akiba2013fast}. \citeauthor{akiba2013fast} implemented a sequential \ccbfs \cite{akiba2013fast}. In this paper, we use \defn{AIY} to refer to the \ccbfs implemented by \citeauthor{akiba2013fast}.

% \subsection{Microbenchmarks for \titlecap{\batchBFS}}
\subsection{Microbenchmarks for \BatchBFS}

\begin{table*}[!t]
  \centering
  \footnotesize
  \begin{tabular}{l|ccl|c|cc|cc|cc}
\toprule
        & \multicolumn{3}{c|}{Graph Information} & Seq-BFS & \multicolumn{2}{c|}{Related Work} & \multicolumn{2}{c|}{Parallel C-BFS} & \multicolumn{2}{c}{Self-Speedup} \\
Dataset &               $n$ &   $m$ &                                                       Notes & Time(s) &        AIY &        Ligra &          Final & Time(s) &        C-BFS &        Ligra \\
\midrule
     EP &             75.9K &  811K &                         Epinions1~\cite{backstrom2006group} &    0.18 & 20.6$\times$ & 4.02$\times$ &    102$\times$ &   0.002 & 4.39$\times$ & 9.44$\times$ \\
   SLDT &             77.4K &  938K &                       Slashdot~\cite{leskovec2009community} &    0.21 & 18.8$\times$ & 3.91$\times$ &   94.1$\times$ &   0.002 & 3.47$\times$ & 9.55$\times$ \\
   DBLP &              317K & 2.10M &                                DBLP~\cite{yang2015defining} &    0.77 & 20.3$\times$ & 6.22$\times$ &    183$\times$ &   0.004 & 10.2$\times$ & 17.1$\times$ \\
     YT &             1.13M & 5.98M &                         com-youtube~\cite{yang2015defining} &    3.30 & 22.9$\times$ & 17.8$\times$ &    445$\times$ &   0.007 & 20.6$\times$ & 31.6$\times$ \\
     SK &             1.69M & 22.2M &                                  skitter~\cite{nr, skitter} &    6.56 & 21.0$\times$ & 30.4$\times$ &    496$\times$ &   0.013 & 26.7$\times$ & 33.1$\times$ \\
   IN04 &             1.38M & 27.6M &                   in\_2004~\cite{BoVWFI,Boldi-2011-layered} &    4.08 & 20.9$\times$ & 4.00$\times$ &    171$\times$ &   0.024 & 10.1$\times$ & 17.8$\times$ \\
     LJ &             4.85M & 85.7M &                  soc-LiveJournal1~\cite{backstrom2006group} &    39.8 & 23.7$\times$ & 61.8$\times$ &   1017$\times$ &   0.039 & 47.7$\times$ & 53.9$\times$ \\
     HW &             1.07M &  112M &                            hollywood\_2009~\cite{nr,BoVWFI} &    18.7 & 20.9$\times$ & 89.7$\times$ &    928$\times$ &   0.020 & 32.4$\times$ & 48.7$\times$ \\
   FBUU &             58.8M &  184M &    socfb-uci-uni~\cite{nr,traud2012social,red2011comparing} &     268 & 32.0$\times$ & 49.6$\times$ &    973$\times$ &   0.276 & 54.4$\times$ & 52.8$\times$ \\
   FBKN &             59.2M &  185M &     socfb-konect~\cite{nr,traud2012social,red2011comparing} &     176 & 27.9$\times$ & 38.8$\times$ &    712$\times$ &   0.247 & 53.1$\times$ & 51.8$\times$ \\
     OK &             3.07M &  234M &                           com-orkut~\cite{yang2015defining} &    61.6 & 19.8$\times$ &  102$\times$ &   1119$\times$ &   0.055 & 49.0$\times$ & 65.4$\times$ \\
   INDO &             7.41M &  301M & indochina~\cite{nr,Boldi-2011-layered,boldi2004-ubicrawler} &    38.8 & 21.9$\times$ & 12.4$\times$ &    452$\times$ &   0.086 & 25.7$\times$ & 35.9$\times$ \\
     EU &             11.3M &  521M &          eu-2015-host~\cite{BoVWFI,Boldi-2011-layered,BMSB} &     119 & 23.9$\times$ & 26.6$\times$ &    821$\times$ &   0.145 & 18.5$\times$ & 41.3$\times$ \\
     UK &             18.5M &  523M &                    uk-2002~\cite{BoVWFI,Boldi-2011-layered} &    91.8 & 22.7$\times$ & 30.7$\times$ &    687$\times$ &   0.134 & 42.1$\times$ & 46.7$\times$ \\
     AR &             22.7M & 1.11B &                     arabic~\cite{BoVWFI,Boldi-2011-layered} &     147 & 22.5$\times$ & 10.7$\times$ &    461$\times$ &   0.319 & 18.0$\times$ & 33.8$\times$ \\
     TW &             41.7M & 2.41B &                             Twitter~\cite{kwak2010twitter}  &     861 & 20.6$\times$ &  157$\times$ &    856$\times$ &   1.006 & 56.3$\times$ & 60.2$\times$ \\
     FT &             65.6M & 3.61B &                          Friendster~\cite{yang2015defining} &    2084 & 20.4$\times$ &  187$\times$ &    813$\times$ &   2.563 & 59.4$\times$ & 64.6$\times$ \\
     SD &             89.2M & 3.88B &                                     sd\_arc~\cite{webgraph} &    1898 & 25.0$\times$ & 80.3$\times$ &    945$\times$ &   2.008 & 55.7$\times$ & 62.5$\times$ \\
     \midrule
GeoMean &                   &       &                                                             &    32.0 & 22.4$\times$ & 27.0$\times$ &    500$\times$ &   0.064 & 24.8$\times$ & 35.4$\times$ \\
\bottomrule
\end{tabular}

  \caption{\small\textbf{
    Tested graphs and microbenchmarks on different BFS algorithms from a cluster of vertices with size 64. 
  }
  The numbers followed by `$\times$' are speedups, higher is better. Others are running time, lower is better. The columns ``AIY'', ``Ligra'' in related work and ``Final'' show the speedup over the ``Seq-BFS''. ``AIY'' is referred to sequential \ccbfs from~\cite{akiba2013fast}, ``Ligra'' is referred to parallel single BFS \cite{shun2013ligra}, and ``Final'' is referred to our parallel \ccbfs.  The ``self-speedup'' is the speedup running the algorithm in parallel over running it in sequential. 
  \hide{
  The `seq BFS' is conducting a regular sequential BFS algorithm. `+EdgeMap' is the speedup of performing Ligra BFSs on one core over the `seq BFS'. `+Cluster' is the speedup of conducting \ccbfs from one cluster on one core over the previous algorithm. `+parallel' is the speedup of running \ccbfs on 96 cores over the previous baseline, which is also the self-speedup of \ccbfs. The second last column shows the overall speedup of \ccbfs in parallel over the `seq BFS'. The last column is the running time of \ccbfs in parallel. The last line, `Mean' shows the geometric mean across different graphs. 
  } 
  \label{table:microbenchmark}
  }
\end{table*}

% \begin{tabular}{>{\bf}l|ccl|c|ccc|cc}
%   \toprule
%           & \multicolumn{3}{c|}{Graph Information} & seq BFS & \multicolumn{3}{c|}{Improvement, relative to previous} &\multicolumn{2}{c}{Final Result: CCBFS} \\
%   Dataset &                 $n$ &     $m$ &                                                          Notes & Time(s) &       +EdgeMap &     +Cluster &    +parallel &  Improvement       & Time(s) \\

\begin{figure*}[htbp]
  \centering
  \includegraphics[width=2\columnwidth]{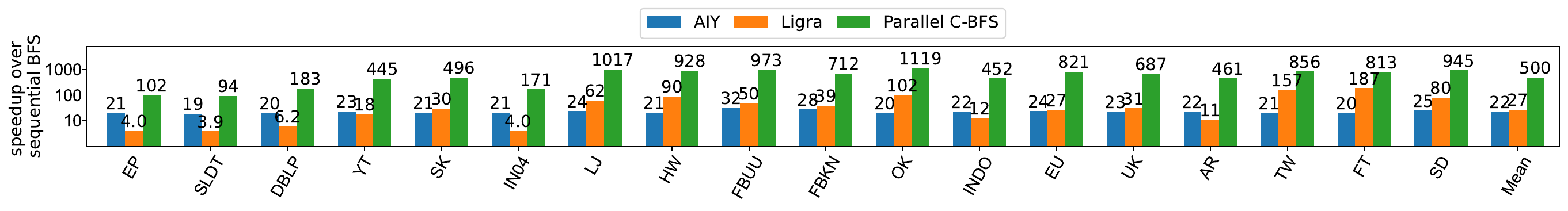}
  \caption{\small\textbf{Speedup of parallel Ligra BFSs and parallel \ccbfs over the standard sequential BFS on cluster with size 64.} $y$-axis is the speedup over sequential regular BFS in log-scale, higher is better. Each group of bars represents a graph, except the last group, which represents the average across all graphs. The numbers on the bar are the speedup of parallel algorithms over the standard sequential algorithm.  
  \label{fig:par_compare}%\vspace{-0.5em}
  }
\end{figure*} 

We start with testing our \batchBFS{} (\ccbfs{}) as a building block.
We first test the simplest case where $d=2$ and $k=w=64$, where each cluster is a star (a vertex and its up to 63 neighbors). 
%Our experiment runs on 10 different clusters on each graph and the average is reported.
We present a detailed experimental study for this simple case because one of our baselines, AIY, only supports sources as star-shaped clusters. 
Another reason is that in previous work (as well as new results in this paper),
we observed that using $d=2$ gives the best overall performance for the two applications discussed in this paper. 
We present some studies about varying $d$ at the end of this subsection. 

Recall that our new \ccbfs{} benefits from two aspects: 
1) using bit-level parallelism with the idea of clustering to compute the results from $O(w)$ sources simultaneously, and 
2) using thread-level parallelism and known parallel techniques for optimizing BFS (e.g., directional optimization). 
In our test, we choose ten different clusters and report the average running time of them, as well as the plain sequential BFS as the simplest baseline. 
%Each of the ten clusters is a vertex, and its 63 neighbors, given a total size of machine word size $w=64$ and cluster diameter $d=2$. 
%Each \ccbfs{} runs in parallel, but all the ten clusters are evaluated one by one. 
The plain sequential BFS processes all 64 sources independently, 
and the running time is in the column ``Seq-BFS time'' in \cref{table:microbenchmark}. 
Our final running time of \ccbfs{} using all techniques is provided in the column ``Par-Time (s)'' in \cref{table:microbenchmark}.
To evaluate the performance gain by both techniques, we compared \ccbfs{} with both Ligra and AIY. 
AIY only supports \ccbfs{} on star-shaped clusters. 
The column ``AIY'' and  ``Ligra'' in \cref{table:microbenchmark} provide the speedups over the plain sequential BFS. 
To better illustrate the results, we show the speedups relative to the plain version ``Seq-BFS'' in \cref{fig:par_compare}. 
Essentially, the column ``AIY'' means the speedup that can be achieved by applying bit-level parallelism on a \batchBFS{} in the sequential setting.  
Similarly, the column ``Ligra'' provides the speedup that can be achieved by applying thread-level parallelism for running $k=64$ regular (non-cluster) BFS. 

%To further study the impact on the performance of each technique, we compared two related work with the plain version. 
%``AIY''\cite{akiba2013fast} and  ``Ligra''\cite{shun2013ligra} are baseline of sequential \ccbfs and parallel single BFS. 
%The column ``AIY'' and  ``Ligra'' in \cref{table:microbenchmark} provide the speedups over the plain sequential BFS. 
%The column ``AIY'' shows the speedup of applying bit-level parallelism in the sequential setting.  The column ``Ligra'' provides the speedup of applying thread-level parallelism and related parallel techniques in the single BFS setting. 

As shown in the column  ``AIY'', using clusters and bit-parallelism gets up to 18.8--32.0$\times$ improvement, which is uniform on different graphs. 
Note that here, we have $k=64$ sources, so  the maximum speedup can be 64$\times$.
Since \ccbfs{} is more complicated than the plain BFS, there are some constant overheads, resulting in an average 22.4$\times$ speedup in a sequential setting.

For applying thread-level parallelism, the improvement on different graphs varies greatly, from 3.91$\times$ (on smallest graphs) to up to 187$\times$ on a 96-core machine with hyperthreads. The benefit on certain graphs (e.g., OK, TW, and FT) is significant, which is over 100$\times$. One reason for the difference in improvement is the directional optimization. On some dense graphs, the backward step
can be applied across many of the rounds and thus save significant work
\iffullversion{(see \cref{sec:parallel_BFS} for more details).}
\ifconference{(see our full version paper for more details).}
Another reason is that there is more parallelism available on larger graphs.
This is consistent with the observations in prior work~\cite{shun2013ligra,Beamer12}.  
On average, effectively utilizing parallelism gives 27.0$\times$ speedup over plain sequential BFS.

\hide{\edgemap slightly improves the performance (1.03$\times$ on average), but the benefit on certain graphs (e.g., HW, OK and TW) is
significant. We note that the \edgemap{} framework is mainly designed to achieve better parallelism,
and naturally incur mild overhead on top of the standard sequential BFS, as is the case on many graphs. 
The benefit of using \edgemap{} in the sequential setting is from the use of the direction optimization that is more cache-friendly. 
On certain graphs, this makes the sequential version of the parallel algorithm more efficient than a standard sequential algorithm.
This is also observed in previous work~\cite{shun2013ligra, wang2023parallel}. Overall, across all graphs, \edgemap{} still enables about 3\% speedup. 
}

\hide{We then change the BFS to \ccbfs{} and show improvement over the previous version (only using \edgemap) in the column ``+Cluster'' in \cref{table:microbenchmark}. 
This is obtained by running our \ccbfs{} on one core. 
Compared to the previous version, running BFS in clusters always improves the performance, 
and the improvement (in the sequential setting) is 5.2--23.5$\times$.
This version runs 64 BFSs in a cluster with a small constant fraction of overhead
due to computing a vector of $d+2$ values in the \compactdis{} (only one distance per vertex needs to be computed in regular BFS).
%Indeed the improvement is about 
%The result roughly indicates that such overhead is about 5.2$\times$ on average in this setting.
In summary, using clusters gives 12.3$\times$ additional speedup on average. 
}

\hide{Finally, we parallelize the BFS for each cluster and obtain our final version. 
We present the speedup compared to the previous version in the column ``+parallel'' in \cref{table:microbenchmark}.
This speedup shows the benefit of using parallelism on \ccbfs{}, 
and is also the self-relative speedup of our \ccbfs{}. 
On a 96-core machine, this speedup ranges from 9.4$\times$ (on the smallest graph) to more than 60$\times$ (on most large graphs). 
We further show the scalability curve on five representative graphs in \cref{fig:bfs_scale}. 
On all the graphs, \ccbfs{} scales almost linearly to 192 hyperthreads. 
In summary, effectively utilizing parallelism gives 36.9$\times$ additional speedup for our \ccbfs{}. 
}

The columns ``Par-Time(s)'' and ``Final'' show our parallel \ccbfs{} running time and overall improvement over ``Seq-BFS''.  
Our algorithm combines the strengths of both bit-level and thread-level parallelism. 
%We can see \ccbfs that applys both bit-level and thread-level parallelism 
Our solution is always better than any of the baselines. 
Compared to the plain sequential baseline, the improvement is 94.1--1119$\times$, and 500$\times$ on average. 

For comparing the techniques and improvements of all baselines, we show a summary figure in \cref{fig:performance}. 
The time and speedup numbers are average on all tested graphs. 
Compared to Ligra, our algorithm improves the performance by 18.5$\times$ by utilizing clusters and bit-level parallelism.
Compared to AIY, our algorithm improves the performance by 22.3$\times$ by utilizing thread-level parallelism. 
%We note that for both baselines, there are certain optimizations that cannot be directly applied in our setting,
%but our algorithm still achieves competitive improvements. 
This indicates that our combination of bit- and thread-level parallelism works very well in synergy. 
Each of them still (almost) fully contributes to the performance, achieving the same level of improvement
as when used independently. 
%Similar results are achieved for different values of $d$, where the total improvement is xx times. \yihan{check if this is tue}
Therefore, we believe our work on an efficient implementation combining thread- and bit-level parallelism fills the gap in the existing 
study of both \ccbfs{} and parallel BFS.

\myparagraph{Self-relative Speedup and Scalability.} In addition to the aforementioned set of baselines, 
we further tested \ccbfs{} in the sequential setting to study the self-relative speedup (in column ``Self-Spd.''). 
The speedup numbers are from 9.44$\times$ (on smallest graphs) to more than 40$\times$ (on most large graphs). 
%Although our \ccbfs uses the directional optimization, we can not exist early in the backward traversing because of running from a cluster of sources. Thus, no graphs has self speedup over 80. 
In summary, the self-speedup of applying thread-level parallelism is 35.4$\times$ on average.

\begin{figure}[!t]
    \centering
    \includegraphics[width=.8\columnwidth]{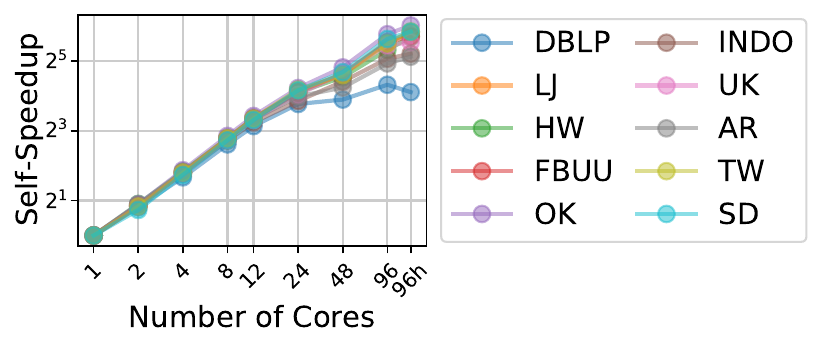}
    \caption{\small \textbf{The scalability curve on different number of processors for \ccbfs.}
    The y-axis is the self speedup.
    The \ccbfs running on one core is always 1. 
    The x-axis is the number of cores.
    96h represents 96 cores with hyperthreads. 
    }
    \label{fig:bfs_scale}
\end{figure} 

  In addition to the overall self-relative speedup on 96 cores with hyperthreads, the scalability curve, as shown in \cref{fig:bfs_scale}, presents the self-relative speedup results across different number of cores. 
  The curve demonstrates that our algorithm achieves nearly linear speedup for most of the graphs, indicating efficient parallel scalability. One exception is the DBLP graph, which deviates from this pattern due to its smaller size.

\hide{
Interestingly, we find  that the improvement of parallelism
is more significant in the clustered setting (36.9$\times$) than in the non-clustered setting (23.6$\times$).
Similarly, the improvement of using clustering is more significant in parallel than 
in the sequential setting. 
These results indicate that both clustering and parallelism 
more positively impact the performance when the other technique presents,
and work very well in synergy. 
We believe that the reason is that parallel BFS (and other traverse-based graph algorithms) are mostly memory-access (I/O) bottlenecked.
Our clustering technique on BFS can save space usage and thus memory I/Os, which will amplify the benefit of parallelism on multicore platforms.
Therefore, we believe our work on a high-performance implementation of 
a combination of them fills the gap in the existing study of both \ccbfs{} 
and parallel BFS, and presents interesting observations and performance evaluations. 
}
%In later experiments, we will further show that their combination
%effectively improves the performance of multiple applications based on distance oracles. 

\hide{

}

\myparagraph{Influence of Cluster Diameter $d$ in \ccbfs.} 
There are no existing implementations supporting \ccbfs with general $d$. 
Recall that our \ccbfs supports general clusters with diameter $d$ instead of star-shaped clusters (a vertex and its neighbors, where $d=2$). 
As shown in Thm.~\ref{theorem:batchBFScost}, the work is proportional to $d$. 
We tested different $d$ from $2$ to $6$ on all the graphs.
We choose 10 representative graphs and show the \ccbfs running time on clusters with different~$d$ in \cref{fig:ccbfs_d}. 
The full running time is shown in \cref{table:BFS_d} in \cref{sec:exp_d}. 
The running time increases as $d$ grows. %, especially when $d$ is small. 
%Although $d$ is usually chosen as a small constant, it still significantly affects the performance. 
A large $d$ allows for better flexibility for the shape of the cluster (the vertices can be further from each other), but significantly affects the performance. 
In \cref{sec:exp_d}, we will further show that using a large $d$ in LL incurs overhead in space and time, 
and therefore, using $d=2$ is almost always more effective in applications. 
%For most graphs, the running time grows slower when $d$ is greater than four. 

\begin{figure}[!t]
    \centering
    \includegraphics[width=.8\columnwidth]{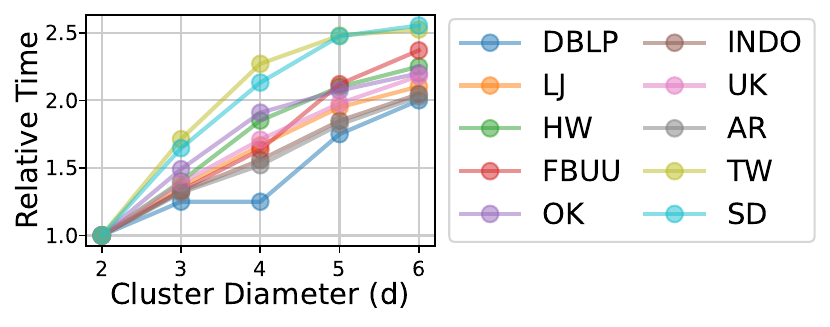}
    \caption{\small \textbf{The running time of \ccbfs on various cluster diameter $d$.}
    The $y$-axis shows the relative running time over $d=2$. The $x$-axis shows the cluster diameter $d$. 
    }
    \label{fig:ccbfs_d}
\end{figure}

\hide{ 
Recall that our new \ccbfs{} benefits from three important techniques: 
1) using the \edgemap framework to achieve optimizations including direction optimizations, 
2) using the idea of clustering to compute the results from $O(w)$ sources simultaneously, and
3) using parallelism to improve performance. 
}

\subsection{2-Hop Distance Oracle}
As mentioned in \cref{sec:applications}, we can replace the \ccbfs{} in Akiba et al.\ to accelerate their algorithm for an EDO.
\iffullversion{Due to the page limit, we present the results in \cref{sec:exp:edo}.} 
\ifconference{Due to the page limit, we present the results in our full version paper.}
To do this, we also need to parallelize the other parts in their algorithm.
In a nutshell, our algorithm with thread-level parallelism accelerates their algorithm by 9--36$\times$, and can also process much larger graphs than they can do. 

\subsection{Approximate Landmark Labeling}\label{sec:exp_appx}

\begin{table}[!t]
  \centering
  \footnotesize
  \begin{tabular}{l|c@{  }@{  }c@{  }@{  }c|r@{  }@{  }r@{  }@{  }r}
\toprule
     & \multicolumn{3}{c|}{Index Time (s)} & \multicolumn{3}{c}{$\epsilon$ (\%)} \\
Data &         Plain & $w=64$ & $w=8$ &          Plain & $w=64$ & $w=8$ \\
\midrule
  EP &          1.26 &   0.02 &  0.08 &            0.4 &    0.1 &   0.1 \\
SLDT &          1.15 &   0.02 &  0.07 &            0.7 &    0.1 &   0.1 \\
DBLP &          3.57 &   0.08 &  0.25 &            2.5 &    2.2 &   1.0 \\
  YT &          9.22 &   0.23 &  0.59 &            0.3 &    0.3 &   0.1 \\
  SK &          13.4 &   0.55 &  1.77 &            1.4 &    0.7 &   0.4 \\
IN04 &          20.0 &   0.96 &  3.88 &            2.1 &    1.9 &   0.9 \\
  LJ &          36.2 &   1.72 &  5.63 &            5.0 &    4.3 &   3.5 \\
  HW &          12.4 &   0.93 &  4.10 &           10.6 &    5.6 &   7.1 \\
FBUU &           138 &   11.3 &  27.0 &            6.2 &   11.9 &   6.9 \\
FBKN &           127 &   10.5 &  24.9 &            6.2 &   11.9 &   6.9 \\
  OK &          26.3 &   2.87 &  10.1 &            8.7 &    7.7 &   7.3 \\
INDO &          83.2 &   5.44 &  29.8 &            3.1 &    1.5 &   1.3 \\
  EU &          87.3 &   7.01 &  34.9 &            2.6 &    1.3 &   1.7 \\
  UK &          80.4 &   8.28 &  38.8 &            3.9 &    4.9 &   3.1 \\
  AR &           148 &   17.6 &  86.8 &            2.6 &    4.0 &   2.2 \\
  TW &           112 &   31.0 &  99.3 &            1.5 &    1.4 &   1.1 \\
  FT &           251 &   61.1 &   193 &           16.8 &   12.4 &  12.8 \\
  SD &           318 &   75.6 &   255 &            0.6 &    0.3 &   0.3 \\
\bottomrule
\end{tabular}

  \caption{
    \small\textbf{The index construction time, (1$+\epsilon$) distortion, and query time for ADO
    based on landmark labeling.} The ``Plain'' is the plain LL algorithm that each landmark is a single vertex. The ``$w=64$'' and ``$w=8$'' \ccbfs-based LL that landmarks are in clusters with size $w$.
    The memory budget is 1024 bytes per vertex. For both index time and $\epsilon$, lower is better. 
    \label{table:ApproxLL}
  }
\end{table}

% \multicolumn{4}{c|}{GeoMean Query Time for 1e5 Queries(ms)} & 1.8 & 1.8 & 2.3\\ % Your custom row

\begin{figure*}
    \centering
    \vspace{-1em}
    \includegraphics[width=1.9\columnwidth]{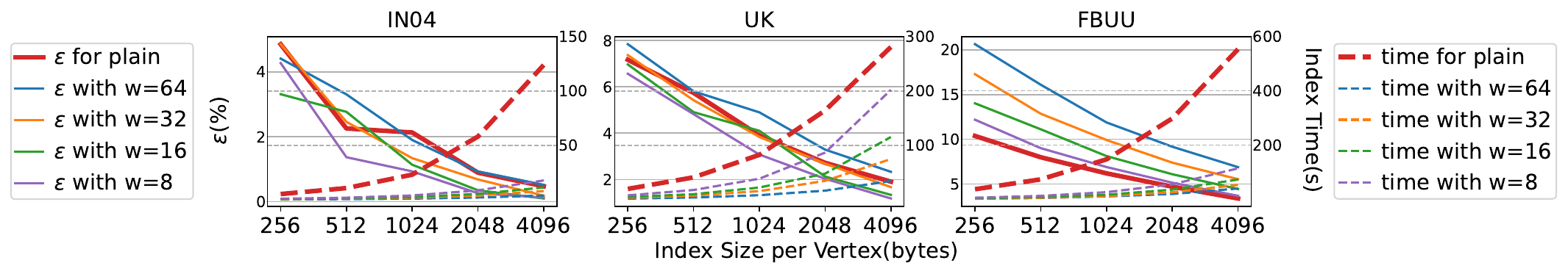}    \caption{\small \textbf{Tradeoffs between index size and distortion/construction time}
    The $x$-axis is the memory limits per vertex in bytes, and is in log-scale. The $y$-axis on the left shows the ($1+\epsilon$) distortion.  The $y$-axis on the right shows the preprocessing time. For both preprocessing time and
    distortion, lower is better. For the algorithms compared here, `plain' is the regular LL, others are the \ccbfs-based LL that choose clusters with size $w$ as landmarks.
    \label{fig:approx} %\vspace{-.75em}
    }
\end{figure*}

%\subsection{Landmark Labeling}
We now show how \ccbfs{} can significantly improve the landmark
labeling (LL) approach with respect to both running time and accuracy.
%and obtain a parallel ADO. 
In general, more landmarks will lead to better accuracy for the distance queries, as the landmarks are more likely
to be on or close to the shortest path between the queried vertices. 
Hence, by using the clusters as landmarks, we can drastically increase the number of landmarks by a factor of $w$ with $O(d)$ overhead in time and space. 
For simplicity, we start by considering clusters with diameter $d=2$,
and later discuss clusters with $d>2$.
Following the optimization mentioned in \cref{sec:approxLL}, we use each cluster as a vertex and its $w-1$ neighbors,
and only store a distance and $d$ \bitsubset{s} in the \compactdis{}. 

We study the effectiveness of our approach by comparing our \ccbfs-based LL with a standard solution where each landmark is one vertex.
We limit the total memory usage for both algorithms to with a
parameter of $t$ bytes per vertex for each graph, and construct an LL-based index
within this memory budget. 
For \ccbfs{}, memory usage per cluster includes the distance (1 byte) and
$d$ \bitsubset{s} ($w/8$ bytes each) per vertex, while the memory usage for a regular LL is one byte (the distance) per vertex. 
For example, \ccbfs with $w=64$ and $d=2$ needs $17$ bytes to store a cluster distance vector, and \ccbfs with $w=8$ and $d=2$ only needs $3$ bytes. 
Thus, the memory usage depends on the word size and number of clusters we choose. 
For fair comparisons, we fix the memory usage per vertex in the index for different baselines.  
% Note that using a larger $w$ usually allows for faster preprocessing time, but also requires larger space. \letong{not correct? with what condition} 
With the same memory budget, a larger $w$ results in fewer (independent) clusters, typically allowing faster preprocessing.
As discussed in \cref{sec:approxLL},
the landmarks in the same cluster are highly correlated to each other, and 
may not bring the same benefit as independent ones. 
Thus, a larger $w$ may also result in less accuracy. 

In this experiment,  we tested on different $w$ from $\{8,16,32,64\}$ and $d=2$.
In \cref{table:ApproxLL}, we show the two extremes of using $w=64$ and
$w=8$. \ifconference{Due to page limits, we put the full information of all tested $w$ in our full version paper.}\iffullversion{The full information of all tested $w$ can be found in \cref{sec:full_approx_info}.}

% In this experiments, we choose the two extremes of using $w=64$ and $w=8$, respectively. 
% Therefore, with the same memory budget, using $w=64$ means to select $8\times$ fewer clusters than $w=8$. \letong{not correct, is about 5.7 times fewer}

%Although they roughly result in the same number of landmarks, as discussed in \cref{sec:approxLL},
% Although CC64 still results in more landmarks, it selects fewer (independent) clusters than CC8. 

%increasing the number of clusters will improve accuracy. 
% As we will show in the experimental results, using $w=64$ can be faster in preprocessing, but less accurate than $w=8$, giving the same memory budget. 

We show the full result of using memory limit as $t=1024$ bytes per vertex in \cref{table:ApproxLL}, where ``plain'' refers to using simple parallel BFS, 
``$w=64$'' is to use \ccbfs{} with $w=64$ (17 bytes per cluster), 
and ``$w=8$'' is \ccbfs{} with $w=8$ (3 bytes per cluster).
With the memory limit of 1024 bytes per vertex, 
% we can choose 1024 landmarks for regular LL, 341 clusters (2728 landmarks) for CC8, or 60 clusters (3840 landmarks) for CC64. \letong{may not be simply 341*64, because some clusters may not contain 64 vertices}
we can choose 1024 landmarks for regular LL, 341 clusters for CC8, or 60 clusters for CC64. 
For regular LL, each landmarks are chosen based on prioritizing the high-degree vertices. 
We compare the index time and the \emph{error} $\epsilon$ shown in percentage. 
%which is defined as $\epsilon=query(u,v)-\delta(u,v)$ for a query pair $(u,v)$.
%It also means that the ADO has $(1+\epsilon)$ distortion. 
It means that the ADO has $(1+\epsilon)$ distortion (defined in \cref{sec:approxLL}). 
We take the average of 100,000 pairs to estimate the error on all the graphs, except for the largest two graphs FT and SD. We only compute the distance of 10,000 pairs on these two graphs, since generating the ground truth is expensive on large graphs.

On all 18 tested graphs, \ccbfs{} with $w=64$ always gives a lower running time.
When $w=8$, the running time is 2.4--5.5$\times$ higher than $w=64$, but still mostly faster 
than the plain version. 
%The few exceptions are all on large graphs (TW, FT and SD).
%\yihan{any explanations? I'm not sure why that's the case, but all cases that CC8 is slow are on large graphs.}
With the same memory budget, $w=64$ roughly processes 5.68$\times$ fewer clusters than $w=8$.
Similarly, comparing $w=64$ and the plain version,
the number of (clustered) BFSs performed by $w=64$ is $17\times$ fewer
than the number of (single) BFSs by regular LL.
The running time can be up to $53\times$ faster, but on average, it is around 10$\times$---each \ccbfs{} is still more expensive than a single BFS,
but the numbers indicate that the overhead is small. 
%CC64 maintains two more \bitsubset{s},

Regarding distortions, $w=8$ generally gives better accuracy than $w=64$, 
but it can also be worse in several instances.
That is because $w=64$ selects more landmarks than $w=8$ (3840 vs.\ 2728) 
but fewer independent clusters (60 vs.\ 341).
Therefore, the loss of using fewer clusters may or may not be compensated by
more (correlated) landmarks.
Empirically, the results still suggest that $w=8$ gives overall
better accuracy, as it is more accurate on 11 out of 18 tested graphs. 
Both $w=8$ and $w=64$ are more accurate than the plain version on at least 16 out of 18 graphs.
This indicates that the increased number of landmarks, although less independent, still positively affects
the accuracy on most of the 18 scale-free networks we tested here. 

It is worth noting that the query times for all graphs are similar, 
and we show the average (geometric mean) query time at the last line in \cref{table:ApproxLL}. 
This is because the query time is completely determined by the number of landmarks,
instead of the graph size. %\letong{I choose to geomean, explain more?}
%Later in this section, we show how our optimization based on bidirectional search
%improves query accuracy with light overhead in time. 

The experimental results suggest that, if the primary objective is to reduce preprocessing time,
using \ccbfs{} with $w=64$ will always give a more efficient version than the plain version. 
The precision is also superior in most cases. 
When achieving better precision is the top priority, \ccbfs{} with a smaller $w$
can significantly reduce the distortion.
When using $w=8$, the distortion is almost always better than both $w=64$ and the plain version,
%and the preprocessing time is also better than regular LL in most cases. 
and the running time also improves over the plain version on most of the graphs. 

We also show the trade-off between preprocessing time and distortion in \cref{fig:approx} on three representative graphs,
particularly including those that $w=8$ and $64$ may perform poorly on. 
We include results using $w=16$ and $w=32$, with different memory budgets ($t$ bytes per vertex, shown in the $x$-axis).
In general, it is still true that large $w$ results in faster preprocessing but larger distortion.
The red line shows the baseline of regular LL. 
For both preprocessing time and distortion, lower is better.
%As the two extremes, $w=8$ or 64, may perform worse than the baseline in distortion.
Overall, using $w=16$ or $w=32$ provides a compromise for both time and distortion, 
and can be a more stable choice across all graphs. 

For both preprocessing time and distortion, using \ccbfs{} provides a significant improvement
on almost all graphs.  
This verifies the effectiveness of our \ccbfs{} to achieve a practical distance oracle.

\myparagraph{\ccbfs with Diameter $d>2$.}
Unlike the (sequential) AIY algorithm that can only apply to $d=2$ case (a vertex and its neighbors),
our \ccbfs is general and works for any given $d$.
Hence, it is interesting to understand how the performance and quality are affected by varying $d$.
Note that larger $d$ gives us flexibility in selecting the sources.
It can generally reduce the correlations between the sources, and even sometimes enable a larger $k$ if no vertices in the graph have large degrees (although this is rare in the scale-free networks tested in this paper).
Therefore, we tested the time and distortion of LL on clusters with larger $d$ and present the results in \cref{sec:exp_d}. 
The takeaway is that, although larger $d$ may provide higher source quality, the space and time consumption are also linearly proportional to $d$.
Hence, given the same memory limit, choosing clusters with $d>2$ does not help with decreasing distortion (other than the graph OK, where $d=3$ improves the distortion over $d=2$ by $3\%$). 
However, generally, the faster running time is due to fewer clusters that can be selected for the same memory budget.

\subsection{Further Study on the \texorpdfstring{$d > 2$}{d > 2} Case in Approximate Distance Oracle} \label{sec:exp_d}

Here, we provide additional information for the $d>2$ case for the landmark labeling, and continue the discussion from \cref{sec:exp_appx}.
Recall $d$ is the diameter of clusters.
We first study its influence in \ccbfs. Then, we study the influence of $d$ in the application Landmark Labeling introduced in in \cref{sec:approxLL}. 

% \subsection{The Performance of \ccbfs on Different $d$}
\myparagraph{The Performance of \ccbfs on Different $d$.}
According to Thm.~\ref{theorem:batchBFScost}, the work of \ccbfs is proportional to $d$. 
In order to know how to choose proper $d$ in different applications, we need to first know how different $d$ affect the running time of \ccbfs in practice.
We tested the running time of \ccbfs with clusters of different diameter $d$.
The results are shown in \cref{table:BFS_d}. 
The running time increases as $d$ grows larger, as shown in \cref{table:BFS_d}, and  space usage ($O(d)$ space to store the distances of a cluster to a vertex) also increases. 
When applying \ccbfs to other applications, we need to weigh the benefits brought by more general clusters and the overhead on time and space costs. 
\cref{table:BFS_d} provides a reference for overhead on running time.

\begin{table}[!t]
  \centering
  \footnotesize
  \begin{tabular}{l|rrrrrr}
\toprule
   Data & $d=2$ & $d=3$ & $d=4$ & $d=5$ & $d=6$ \\
\midrule
     EP & 0.002 & 0.002 & 0.003 & 0.003 & 0.003 \\
   SLDT & 0.002 & 0.002 & 0.002 & 0.002 & 0.003 \\
   DBLP & 0.004 & 0.005 & 0.005 & 0.007 & 0.008 \\
     YT & 0.007 & 0.011 & 0.012 & 0.014 & 0.015 \\
     SK & 0.013 & 0.017 & 0.020 & 0.023 & 0.026 \\
   IN04 & 0.024 & 0.027 & 0.029 & 0.032 & 0.033 \\
     LJ & 0.039 & 0.053 & 0.065 & 0.076 & 0.082 \\
     HW & 0.020 & 0.028 & 0.037 & 0.042 & 0.045 \\
   FBUU & 0.276 & 0.368 & 0.451 & 0.585 & 0.654 \\
   FBKN & 0.247 & 0.318 & 0.388 & 0.501 & 0.564 \\
     OK & 0.055 & 0.082 & 0.105 & 0.114 & 0.121 \\
   INDO & 0.086 & 0.114 & 0.134 & 0.159 & 0.176 \\
     EU & 0.145 & 0.195 & 0.239 & 0.284 & 0.290 \\
     UK & 0.134 & 0.186 & 0.229 & 0.265 & 0.292 \\
     AR & 0.319 & 0.420 & 0.485 & 0.581 & 0.647 \\
     TW & 1.006 & 1.723 & 2.284 & 2.494 & 2.539 \\
     FT & 2.563 & 4.412 & 5.880 & 6.594 & 6.815 \\
     SD & 2.008 & 3.305 & 4.280 & 4.970 & 5.128 \\
\midrule
GeoMean & 0.064 & 0.086 & 0.104 & 0.120 & 0.131 \\
\bottomrule
\end{tabular}

  \caption{
    \small\textbf{The parallel \ccbfs time (seconds) for one cluster with size 64 on different cluster diameter $d$.} 
    \label{table:BFS_d}
  }
\end{table}

% \subsection{The Performance of Different $d$ on Landmark Labeling}
\myparagraph{The Performance of Different $d$ on Landmark Labeling.}
Our \ccbfs is the first implementation of \ccbfs that supports general clusters. It gives us a chance to study the performance and quality of LL for clusters with larger $d$. Clusters with larger $d$ have fewer correlations for vertices in the same cluster, but the computational cost for \ccbfs also increases. We are interested in whether it is worth using clusters with larger $d$ in the LL application.

We show the full result of using memory limit as $t=1024$ bytes per vertex and word size $w=64$ in \cref{table:ApproxLL_d}, where  ``plain'' refers to using simple parallel BFS (1 bytes per landmark), ``$d=2$'' is to use \ccbfs with $d=2$ (17 bytes per cluster), ``$d=3$'' is \ccbfs with $d=3$ (25 bytes per cluster)  and  ``$d=4$'' is \ccbfs with $d=4$ (33 bytes per cluster). With memory limits 1024 bytes per vertex, we can choose 1024 landmarks for regular LL, 60 clusters for $d=2$, 40 clusters for $d=3$ , and 31 clusters for $d=4$. Landmarks are chosen based on prioritizing the high-degree vertices, similar to the setting previously.

The construction time is reversely proportional to $d$, since larger $d$ leads to fewer sources and clusters.
However, for distortion, other than the graph OK, larger $d$ always leads to lower accuracy.
For the graph OK, the improvement is only about $3\%$.
In conclusion, since for distance oracles, generally the space usage and accuracy are the most crucial---given the same memory budget, choosing clusters from diameter $d>2$ does not help with accuracy on most graphs. 
However, since our \ccbfs algorithm is highly parallel and efficient, it provides opportunities for researchers in the future to study other applications on whether more general sources with $d>2$ can be more effective than star-size ones with $d=2$.

\begin{table}[!t]
  \centering
  \footnotesize
  \begin{tabular}{l|r@{ }r@{ }r@{ }r|r@{ }r@{ }r@{ }r}
\toprule
     & \multicolumn{4}{c|}{Construction Time (s)} & \multicolumn{4}{c}{Distortion $\epsilon$ (\%)} \\
Data &         Plain & $d=2$ & $d=3$ & $d=4$ &          Plain & $d=2$ & $d=3$ & $d=4$ \\
\midrule
  EP &          1.26 &  0.03 &  0.02 &  0.02 &            0.4 &   0.1 &   0.2 &   0.2 \\
SLDT &          1.15 &  0.02 &  0.02 &  0.02 &            0.7 &   0.1 &   0.4 &   0.6 \\
DBLP &          3.57 &  0.09 &  0.07 &  0.06 &            2.5 &   2.2 &   3.4 &   4.0 \\
  YT &          9.22 &  0.23 &  0.19 &  0.17 &            0.3 &   0.3 &   0.6 &   0.8 \\
  SK &          13.4 &  0.58 &  0.41 &  0.37 &            1.4 &   0.7 &   1.5 &   2.1 \\
IN04 &          20.0 &  1.06 &  0.70 &  0.58 &            2.1 &   1.9 &   2.4 &   2.6 \\
  LJ &          36.2 &  1.79 &  1.41 &  1.26 &            5.0 &   4.3 &   5.5 &   6.0 \\
  HW &          12.4 &  0.99 &  0.75 &  0.63 &           10.6 &   5.6 &   6.5 &   7.0 \\
FBUU &           138 &  11.7 &  9.31 &  8.26 &            6.2 &  11.9 &  13.4 &  15.9 \\
FBKN &           127 &  10.8 &  8.82 &  7.52 &            6.2 &  11.9 &  13.5 &  16.0 \\
  OK &          26.3 &  3.05 &  2.65 &  2.32 &            8.7 &   7.7 &   7.5 &   7.6 \\
INDO &          83.2 &  6.09 &  3.92 &  3.40 &            3.1 &   1.5 &   2.2 &   2.6 \\
  EU &          87.3 &  8.00 &  6.33 &  5.55 &            2.6 &   1.3 &   1.9 &   2.2 \\
  UK &          80.4 &  9.06 &  6.63 &  5.90 &            3.9 &   4.9 &   5.3 &   5.6 \\
  AR &           148 &  19.4 &  13.4 &  11.7 &            2.6 &   4.0 &   6.8 &   7.1 \\
  TW &           112 &  31.0 &  27.8 &  23.9 &            1.5 &   1.4 &   1.5 &   1.8 \\
  FT &           251 &  61.1 &  52.0 &  45.8 &           16.8 &  12.4 &  13.7 &  14.6 \\
  SD &           318 &  75.6 &  61.2 &  53.1 &            0.6 &   0.3 &   0.8 &   0.9 \\
\bottomrule
\end{tabular}

  \caption{
    \small\textbf{The Approximate Landmark Labeling Time and Distortion for cluster with size 64 and memory limits 1024 bytes per vertex on different cluster diameter $d$.} 
    \label{table:ApproxLL_d}
  }
\end{table}

\section{Related Work}\label{sec:related}

This paper mainly focuses on  scale-free (small-diameter) networks, and we refer the audience to an excellent survey~\cite{BastDGMPSWW16} of algorithms for large-diameter graphs (e.g., road networks).
In this paper, we discuss the approaches based on landmark labeling, and here, we review other approaches for distance queries.
We first review the approximate solutions.
The concept of approximate distance oracles (ADOs) was proposed by \citet{thorup2005approximate}, and has later been theoretically studied in dozens of papers.
Practically, papers~\cite{Potamias09,gubichev2010fast,das2010sketch} discuss how to select the best ``landmarks'' for these type of sketch-based solutions, and showed that degrees or betweenesses are reasonable metrics.
Other solutions include embedding-based solutions~\cite{zhao2010orion,rao1999small,shavitt2008hyperbolic,meng2015grecs} (embedding graph metric into other simpler ones such as Euclidean space), tree-based approaches~\cite{blelloch2017efficient,zeng2023litehst}, and some recent attempts using deep learning~\cite{rizi2018shortest}.
Most of these approaches are more complicated than the landmark-based labeling mentioned in this paper.

Regarding the exact distance queries, Pruned Landmark Labeling (PLL)~\cite{akiba2013fast} is among the latest solutions for scale-free networks.  
Li et al.~\cite{li2019scaling} showed another implementation, but they did not release their code, so we cannot compare their running time with ours.
Other techniques, such as contraction hierarchies (CH)~\cite{geisberger2012exact} and transit nodes routing~\cite{bast2007fast}, focus more on large-diameter graphs like road networks.

\section{Conclusion} 

In this paper, we present parallel implementations for \batchBFS{},
which runs BFS from a cluster of vertices with diameter $d$. 
Our algorithm is work-efficient in theory, and also leads to high parallelism on low-diameter graphs as we tested in the experiments.
We employ both bit-level and thread-level parallelism to optimize the performance. 
Both of them lead to significant speedup.
Especially, we observed that bit-level and 
thread-level parallelism work well in synergy. 
We also show that the combination of the techniques also
leads to performance improvement in two applications in distance oracles
in multiple measurements of preprocessing time and accuracy, 
and allows our implementation to scale to much larger graphs than a sequential algorithm. Besides, our \ccbfs is the first implementation that supports general clusters with diameter $d$ instead of  star-shaped clusters, which give us a chance to study the benifits and overhead of choosing clusters with larger $d$ in \ccbfs and in its applications. 

\hide{ 
\yihan{The paper is 1+ pages over for now. A simple way to remove some content is to remove everything related to bidirectional search since it doesn't seem to 
affect the main story much. WDYT?
} 
}

\section*{Acknowledgments}
This work is supported by NSF grants CCF-1919223, CCF-2103483, CCF-2119352, IIS-2227669, TI-2346223, and NSF CAREER Awards CCF-2238358 and CCF-2339310, the UCR Regents Faculty Development Award, and the Google Research Scholar Program.

\clearpage
{
\small
\bibliographystyle{ACM-Reference-Format}
\bibliography{bib/strings,bib/main}
}

%%
%% If your work has an appendix, this is the place to put it.
\appendix
\iffullversion{
  \section{Parallel BFS}
\label{sec:parallel_BFS}
We briefly review direction-optimizing parallel BFS since it is the state-of-the-art parallel BFS~\cite{gbbs2021}, and many of the ideas are also used in our parallel \batchBFS. 
%We start with BFS from a single source $s\in V$ (high-level idea in \cref{alg:simpleBFS}). 
We present the high-level idea of BFS from a single source $s\in V$ in \cref{alg:simpleBFS}. 
The algorithm maintains a \defn{frontier} of vertices to explore in each round, starting from the source, and the algorithm finishes in $D$ rounds. 
It also maintains the distance from the source to each vertex in array $\delta$, initialized as infinity except for the source. 
In round $i$, the algorithm processes the current frontier $\ff_i$,  adding their out neighbors to the next frontier $\ff_{i+1}$ if the
neighbor has not been visited.
If multiple vertices in $\ff_i$ attempt to add the same vertex to $\ff_{i+1}$ to the next frontier, a \textsc{compare\_and\_swap} is used to ensure that only one can succeed in updating the distance from $\infty$ to the current distance.  
% The process that maps a subset of vertices (current frontier) to another subset of vertices (next frontier) by applying certain function to the edges ($cond\_f$) from current frontier, is abstracted as \defn{\edgemap} by Ligra\cite{}. 
% We use the framework defined by Ligra, which is shown in \cref{alg:edgemap} \textsc{EdgeMap-Sparse}.

Following the existing graph processing library Ligra \cite{shun2013ligra,shun2015smaller}, we use the \edgemap{} framework with the direction-optimizing parallel BFS (see \cref{alg:simpleBFS,alg:edgemap}). 
It maps a subset of vertices (current frontier) to another subset of vertices (next frontier) by applying a given function to the out-edges from the current frontier. 
\edgemap requires two user-defined functions, \condf and \edgef. 
\condf{$(v)$} is a function to indicate whether the vertex $v$ needs further processing.  In BFS, it checks whether the vertex has
not been visited---i.e., whether its distance is still $\infty$ (\cref{line:bfs_cond} in \cref{alg:simpleBFS}). 
\edgef{$(u,v)$} is a function for edge $(u,v)$, which processes the edge, and returns a boolean value indicating whether $v$ should be added to the next frontier by $u$. 
In BFS, it sets the distance of $v$ to one plus the distance of $u$. 
When multiple vertices want to add $v$ at the same time, $u$ should add $v$ iff. it wins \textsc{compare\_and\_swap} (\cref{line:bfs_edge} in \cref{alg:simpleBFS}). 
% For example, to implement the BFS mentioned above using \edgemap framework, $cond\_f(v)$ and $edge\_f(u,v)$ are set as \cref{line:bfs_cond,line:bfs_edge} in \cref{alg:simpleBFS}. \edgemap is the building block for parallel BFS and many other vertex-based graph algorithms. Our \batchBFS also uses \edgemap. 

%\guy{The following paragraphs should be shortened.  Too much detail.}
The idea of \emph{direction optimization} means to implement 
\edgemap{} in two different ``modes'': \mapsparse (forward) and \mapdense (backward), as shown in \cref{alg:edgemap}. 
In the sparse (forward) mode, we start with the current frontier and consider its out-neighbors as mentioned above. 
However, if the number of out-neighbors of the frontier is sufficiently large (i.e., close to $n$), 
it can be more efficient to use the dense (backward) mode instead, where all unvisited vertices look at all their in-neighbors to see if there is any on the frontier. 
With a large frontier, the dense mode can avoid costly atomic operations and make better use of cache locality, giving better performance. 
\edgemap is a building block for parallel BFS and many other vertex-based graph algorithms. Our \batchBFS also uses \edgemap with the directional optimization.

%There are two typical ways to implement \edgemap{}, \mapsparse and \mapdense (see \cref{alg:edgemap}). 

\begin{algorithm}[t]
  \small
 \caption{Framework of Parallel BFS\label{alg:simpleBFS}}
 \KwIn{A graph $G=(V,E)$ and a source $s\in V$}
 \SetKwFor{parForEach}{parallel\_for\_each}{do}{endfor}
 \SetKwInOut{Maintains}{Maintains}
 \DontPrintSemicolon
 $\delta \gets \{\infty, ... ,\infty\}$\\
 $\delta[s]\gets 0$\\
 $\ff_0\gets\{s\}$\\
 $i\gets 0$\\
 $\condf{(v)} \gets$ \Return $\delta[v] =
 \infty$ \label{line:bfs_cond} \\%\guy{I changed from not equal to equal.}.
 $\edgef{(u,v)} \gets$ \Return \textsc{compare\_and\_swap}$(\&\delta[v], \infty, \delta[u]+1)$ \label{line:bfs_edge}\\
 \While {$\ff_{i}\ne \emptyset$}{
  % \parForEach { $\{(u,v)| u\in \ff_i, v\in N(u)\}$ \label{line:EdgeMap_start}}{
  %   \If{$\textsc{compare\_and\_swap}(\&\delta[v], \infty, \delta[u]+1)$}{
  %     add $v$ to $\ff_{i+1}$ \label{line:EdgeMap_end}
  %   }
  % }
  $\ff_{i+1} \gets \textsc{EdgeMap}(\ff_i, \condf, \edgef)$ \label{line:EdgeMap}\\
 $i\gets i+1$
 }
 \Return $\delta$
 \end{algorithm}

\hide{
\mapsparse lets vertices in the frontier process their neighbors that satisfy the condition \condf. If \edgef{$(u,v)$} succeeds, $u$ will put its neighbor $v$ to the next frontier (as we mentioned previously). 
In \mapdense{}, each the vertex $v\in V$ satisfying the condition \condf will let each of its neighbor $u$ to visit itself by using \edgef{$(u,v)$}. 
All these neighbors $u$ will be processed sequentially. 
Once a neighbor $u$ processes \edgef successfully, $v$ stops processing the remaining neighbors and put itself to the next frontier.  
These two versions of \edgemap have their own advantageous scenarios.  
When the frontier size is small, \mapsparse is more efficient, and the work is proportional to the number of edges processed. 
When the frontier size is large, \mapdense is more efficient because it can avoid atomic operations when processes edges---
%In BFS, \mapsparse may have multiple vertices want to process the same neighbor and update its distance at the same time, so atomic operations are needed in $edge\_f$ to avoid data race. 
in \mapdense, one vertex looks up its neighbors and updates its own distance, so no atomic update is needed. 
%Because no other vertices will update its distance, no atomic operations are needed in $edge\_f$. Besides, \mapdense can exit early that avoids visiting all the edges. Because our algorithm can not exit early, so we do not explain its details here. 
% \edgemap has an optimization when the frontier size is comparable to $n$, which is called \textsc{EdgeMap-Dense}. Instead of letting vertices in the frontier forward propagate information to their neighbors, \textsc{EdgeMap-Dense} lets all the vertices in $V$ backward pull information from their neighbors in the frontier (see \textsc{EdgeMap-Dense} in \cref{alg:edgemap}). 
% There are two benifits of using \textsc{EdgeMap-Dense}, one is that it may skip visiting edges: once it finds a neighbor in the frontier, it stops visiting remaining neighbors (see \cref{line:exit_early}), which makes parallel BFS can be even faster than sequential BFS algorithm on one core \cite{}. The other benifit is that it avoids atomic operation when adding vertices to the next frontier. 
% \textsc{EdgeMap-Sparse} needs atomic operations to guanrantee one vertex is only added once to the next frontier.
% \textsc{EdgeMap-Dense} let each vertex decide whether to put itself to the next frontier, which already guanrantees that no duplicates in the next frontier, so there is no need for atomic operation. 
% To distinguish the two kinds of \textsc{EdgeMap}, Ligra call the normal version as \textsc{EdgeMap-Sparse}, and the optimization version as \textsc{EdgeMap-Dense}.
\edgemap will automatically decide which version to use according to heuristics related to the frontier size (see \cref{alg:edgemap}). 
%Note that not all the algorithms are able to use \mapdense. 
%Our \batchBFS can switch between \mapsparse and \mapdense to take advantages of both of them.
Our \batchBFS also uses both \mapsparse and \mapdense to optimize the performance.
% Although we can not exit early in the dense version for correctness, we can still take advantage of avoiding atomic writes. 
}

\begin{algorithm}[t]
  \footnotesize
 \caption{Framework of \textsc{EdgeMap}\label{alg:edgemap}}
 \KwIn{A subset of vertices $\ff_{in}$, a condition function for vertex \condf, and a mapping function for edge \edgef}
 \KwOut{A subset of vertices $\ff_{out}$}
 \DontPrintSemicolon
%  \myfunc{\upshape\textsc{EdgeMap}$(F_i, f, cond)$}{
%     $l$ is the frontier size.\\
%     \If{$F_i$ is sparse}{
%         $d$ is the degree sum of the frontier\\
%         \lIf {$(l+d)> (m/10)$}{\Return{$\textsc{EdgeMap-Dense}(F_i)$}}
%         \lElse{\Return {$\textsc{EdgeMap-Sparse}(F_i)$}}
%     }\Else{
%     \lIf{$l>n/10$} {\Return {$\textsc{EdgeMap-Dense}(F_i)$}}
%     \lElse{\Return $\textsc{EdgeMap-Sparse}(F_i)$}
%     }
% }
\myfunc{\upshape\textsc{EdgeMap}$(\ff_{in}, \condf, \edgef)$}{
  \If{$N^+\left(\ff_{in}\right)$ \text{is large}}{\Return \upshape\textsc{EdgeMap-Dense}($\ff_{in}$, \condf, \edgef)}
  \Else{\Return \upshape\textsc{EdgeMap-Sparse}($\ff_{in}$, \condf, \edgef)}
}

\myfunc{\upshape\textsc{EdgeMap-Sparse}($\ff_{in}$, \condf, \edgef)}{
  $\ff_{out} = \emptyset$\\
  \parForEach{ $\{(u,v)~|~u\in \ff_{in}, v\in N^+(u)\}$}{
      \If{\upshape\condf{$(v)$}}{
         \lIf {\upshape\edgef{$(u,v)$}}{
            $\ff_{out} \gets \ff_{out}\cup \{v\}$ 
        }
      }
  }
  \Return {$\ff_{out}$}
}
\myfunc{\upshape\textsc{EdgeMap-Dense($\ff_{in}$, \condf, \edgef)}}{
  $\ff_{out} = \emptyset$\\
  \parForEach{$v \in V$}{
    \If {\upshape\condf{$(v)$}}{
      \For{$u\in N^-(v)$}{
        \If {\upshape$u \in \ff_{in}$ and \edgef{$(u,v)$}}{
             $\ff_{out} \gets \ff_{out}\cup \{v\}$\\
          break \label{line:exit_early}
        }
      }
    }
  }
  \Return {$\ff_{out}$}
}
\end{algorithm}

% \subsection{Problem Definitions}

% \input{exp_CCBFS_appendix.tex}

\section{Applying Bi-Directional Search in the Queries} \label{sec:biBFS}
In the ADO based on landmark labeling, 
the distortion between two faraway vertices can be reasonably good,
but the returned distance of two nearby vertices may not be as accurate~\cite{akiba2012shortest}.
This is because when the shortest path between $u$ and $v$ is short,
it is less likely to pass a landmark, or to be close to a landmark. 
To further improve the accuracy, several techniques were proposed~\cite{gubichev2010fast,tretyakov2011fast,qiao2012approximate}. They typically store shortest-path trees rooted at the landmarks instead of just storing distances, and finding loops or shortcuts on the trees during the query to reduce the distortion. While they improve the accuracy, the query time becomes much slower.
%The overall distortion for two random vertices are good (see \cref{sec:exp_appx}). 
%However, the distortion for pairs with small distance is worse than that for pairs with large distance\cite{}, because the lengths of shortest paths for close pairs are small so that they are more unlikely to pass nearby landmarks. There is lots of related work on reducing the distortion for close pairs \cite{}. We use a simple way that works well in practice. 
% We found that, applying length-limited bi-directional local search from the two query vertices can , which is a trade-off for precision and query time. 
In our implementation, we add a fixed-size bi-directional search between queried vertices $u$ and $v$ before we run the actual query using landmarks.
In particular, we will search $\tau$ vertices from each side by a BFS order, and take an intersection to find the minimum distance among them. 
Conceptually, the vertices encountered in the search can be viewed as landmarks generated on the fly. 
The intuition is that, for two nearby vertices, a bidirectional search should take a short time but give an exact distance with no distortion. 
In our experiments, when choosing a proper search size $\tau$, this optimization greatly improved query quality with a small overhead in query time. 
%In addition, we can always skip the landmarks 
\hide{For the pairs that are close, we can direcly apply a bi-directional search from the two queried vertices. Note that, the two vertices are close to each other, such search will not be expensive.  Further more, we can skip the landmarks during the search, because if the shortest paths passing through any landmark, it must be correctly answered by the index. Besides, we choose landmarks prioritize by their degrees, when a reasonable number of landmarks have been chosen, we can expect the remaining vertices does not has large degrees. According to the above intuitions, we apply length limited bi-directional search from the two query vertices and skip landmarks during the process. We take the minimum distance obtained by the index and bi-directional search as the final answer of a query. This optimization provides a trade-off between query time and distortion, and we explored this trade-off in \cref{sec:exp_appx}. 
}

\begin{figure*}[htbp]
    \centering
    \includegraphics[width=1.5\columnwidth]{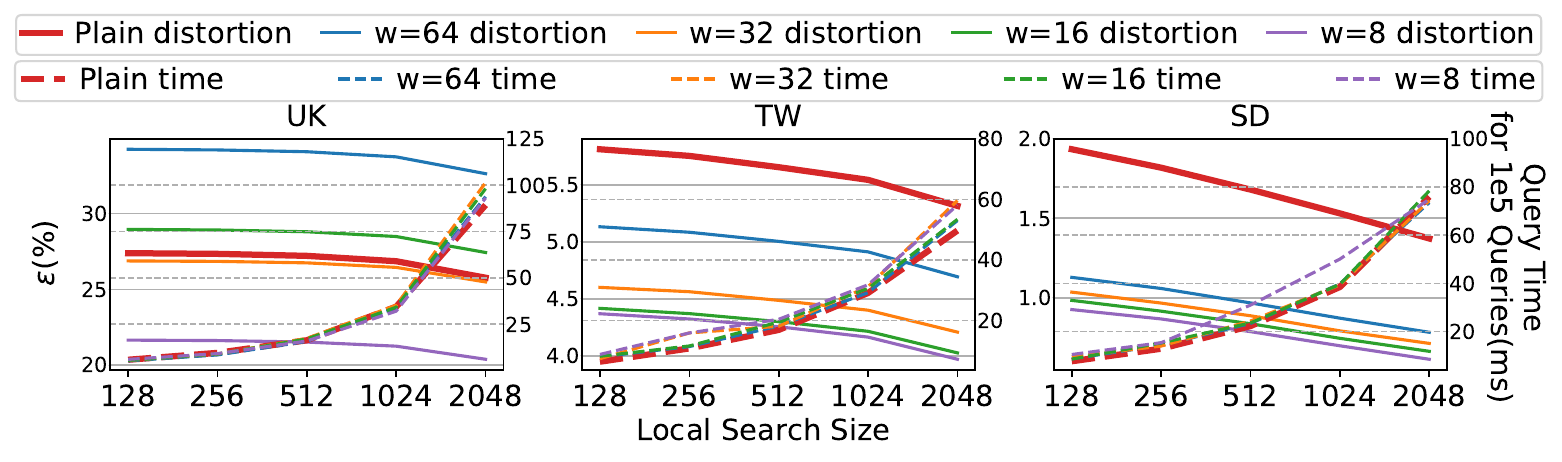}
    \caption{\small \textbf{Tradeoffs between local search size and distortion/query time.}
    With a memory limit of 1024 bytes per vertex for the index, we vary the local search size from 128 to 2048 to see the influence on distortion and query time. The $x$-axis is the local search size, which is the total number of vertices explored during the bi-directional BFS from two queried vertices. The $y$-axis on the left is the ($1+\epsilon$) distortion, corresponding to the solid lines in the figures. The $y$-axis on the right is the parallel query time for $10^4$ queries in milliseconds, corresponding to the dashed lines in the figures. For both distortion and preprocessing time, lower is better.  For algorithms compared here, `Plain' is the regular LL, and others are \ccbfs-based LL with cluster size $w$.
    }
    \label{fig:approx_query}
\end{figure*}  

\myparagraph{Performance Study of Query Optimizations.} 
As we mentioned, using a bidirectional search may help improve the accuracy for close-by vertices. 
We test different local search sizes and show the improvement in distortion and their impact on query time in \cref{fig:approx_query}.
For almost all cases, using a local search size of 512 or more, we can see a clear improvement in distortion.
However, with the local search size reaches 2048, the query time increases dramatically.

\section{Approximate Landmark Labeling Full Information}\label{sec:full_approx_info}
In the paper, due to page limits, we only show the two extremes $w=64$ and $w=8$. 
We provide the entire experimental results for all the tested $w$, and their query time in \cref{table:ApproxLL_full}. 
\begin{table*}[htbp]
  \centering
  \small
  \begin{tabular}{l|c@{ }c@{ }c@{ }c@{ }c|r@{ }r@{ }r@{ }r@{ }r|c@{ }c@{ }c@{ }c@{ }c@{ }}
\toprule
     & \multicolumn{5}{c|}{Index Time(s)} & \multicolumn{5}{c|}{$\epsilon$(\%)} & \multicolumn{5}{c}{Query Time(ms)} \\
Data &         Plain & $w=64$ & $w=32$ & $w=16$ & $w=8$ &          Plain & $w=64$ & $w=32$ & $w=16$ & $w=8$ &          Plain & $w=64$ & $w=32$ & $w=16$ & $w=8$ \\
\midrule
  EP &          1.26 &   0.02 &   0.04 &   0.05 &  0.08 &            0.4 &    0.1 &    0.1 &    0.1 &   0.1 &            2.4 &    2.1 &    2.0 &    1.7 &   2.4 \\
SLDT &          1.15 &   0.02 &   0.03 &   0.06 &  0.07 &            0.7 &    0.1 &    0.1 &    0.1 &   0.1 &            2.0 &    2.0 &    1.9 &    1.9 &   2.4 \\
DBLP &          3.57 &   0.08 &   0.11 &   0.16 &  0.25 &            2.5 &    2.2 &    1.4 &    1.1 &   1.0 &            2.0 &    1.8 &    1.9 &    1.9 &   2.5 \\
  YT &          9.22 &   0.23 &   0.27 &   0.37 &  0.59 &            0.3 &    0.3 &    0.2 &    0.2 &   0.1 &            1.7 &    1.8 &    2.1 &    2.1 &   2.5 \\
  SK &          13.4 &   0.55 &   0.81 &   1.14 &  1.77 &            1.4 &    0.7 &    0.5 &    0.4 &   0.4 &            1.7 &    1.8 &    2.0 &    2.0 &   2.7 \\
IN04 &          20.0 &   0.96 &   1.84 &   2.38 &  3.88 &            2.1 &    1.9 &    1.3 &    1.1 &   0.9 &            1.7 &    1.8 &    2.2 &    2.0 &   2.1 \\
  LJ &          36.2 &   1.72 &   2.48 &   3.49 &  5.63 &            5.0 &    4.3 &    3.7 &    3.6 &   3.5 &            1.8 &    1.8 &    2.0 &    2.0 &   2.3 \\
  HW &          12.4 &   0.93 &   1.73 &   2.49 &  4.10 &           10.6 &    5.6 &    5.9 &    6.5 &   7.1 &            1.7 &    1.8 &    2.1 &    2.1 &   2.4 \\
FBUU &           138 &   11.3 &   13.1 &   17.8 &  27.0 &            6.2 &   11.9 &    9.9 &    8.1 &   6.9 &            1.8 &    1.8 &    2.0 &    1.9 &   2.2 \\
FBKN &           127 &   10.5 &   12.1 &   16.3 &  24.9 &            6.2 &   11.9 &   10.0 &    8.2 &   6.9 &            1.8 &    1.8 &    2.0 &    1.9 &   2.2 \\
  OK &          26.3 &   2.87 &   4.53 &   6.43 &  10.1 &            8.7 &    7.7 &    7.6 &    7.3 &   7.3 &            1.8 &    1.8 &    2.0 &    2.0 &   2.3 \\
INDO &          83.2 &   5.44 &   11.1 &   18.0 &  29.8 &            3.1 &    1.5 &    1.3 &    1.2 &   1.3 &            1.8 &    1.8 &    2.0 &    1.9 &   2.0 \\
  EU &          87.3 &   7.01 &   14.5 &   21.9 &  34.9 &            2.6 &    1.3 &    1.0 &    1.2 &   1.7 &            1.8 &    1.8 &    2.0 &    1.9 &   2.0 \\
  UK &          80.4 &   8.28 &   15.8 &   22.1 &  38.8 &            3.9 &    4.9 &    3.8 &    4.1 &   3.1 &            1.8 &    1.8 &    2.0 &    1.9 &   2.1 \\
  AR &           148 &   17.6 &   36.6 &   54.9 &  86.8 &            2.6 &    4.0 &    3.2 &    2.2 &   2.2 &            1.8 &    1.8 &    2.0 &    1.9 &   2.0 \\
  TW &           112 &   31.0 &   47.0 &   62.7 &  99.3 &            1.5 &    1.4 &    1.2 &    1.1 &   1.1 &            1.8 &    1.8 &    1.9 &    1.9 &   2.3 \\
  FT &           251 &   61.1 &   88.2 &    120 &   193 &           16.8 &   12.4 &   12.0 &   12.1 &  12.8 &            1.8 &    1.8 &    2.0 &    1.9 &   2.4 \\
  SD &           318 &   75.6 &    125 &    163 &   255 &            0.6 &    0.3 &    0.3 &    0.3 &   0.3 &            1.8 &    1.8 &    1.9 &    1.9 &   2.4 \\
\bottomrule
\end{tabular}

  \caption{
    \small\textbf{The index construction time, (1$+\epsilon$) distortion, and query time for ADO
    based on landmark labeling.} The ``Plain'' is the normal LL algorithm in which each landmark is a single vertex. Others are \ccbfs-based LL that landmarks are in clusters with size $w$.
    The memory budget is 1024 bytes per vertex. For both index time and $\epsilon$, lower is better. 
    \label{table:ApproxLL_full}
  }
\end{table*}

\section{Exact 2-Hop Distance Oracle}
\label{sec:2hop_distance}
\subsection{Algorithm Description of Exact 2-Hop Distance Oracle}
\label{sec:2hopLL}

Our second application is an exact distance oracle, which always answers the correct shortest distance in queries. 
Many EDOs are based on the idea of 2-hop cover~\cite{CohenHKZ02,cheng2009line,abraham2012hierarchical}, described  as follows.

For each vertex $v$, 2-hop labeling methods select a subset of vertices $u\in V$ as \emph{hubs} for $v$, and precompute their distance $\delta(u,v)$. 
We call the precomputed distances for a vertex $v$ as the \emph{label} of $v$, and denote it as $L(v)$: a set of pairs $(u,\delta(u,v))$ for each of $v$'s hub $u$. 
Note that the hubs can be different for different vertices. 
Then, $query(s,t)$ finds the shortest distance passing through the intersection of their hubs, 
% i.e. $\min\{\delta(s,v)+\delta(t,v) | (v,\cdot)\in L(s), (v,\cdot) \in L(t)\}$. 
i.e., $\min\{\delta(s,v)+\delta(t,v) ~|~ (v,\cdot)\in L(s) \cap L(t)\}$.
We call $L$ a \emph{2-hop cover} of $G$ if $query(s,t)$ can correctly answer the distance between any pair of vertices. 
Finding a small 2-hop cover efficiently is a long-standing challenge~\cite{CohenHKZ02,cheng2009line,abraham2012hierarchical}. 

One of the state-of-the-art approaches is \defn{Pruned Landmark Labeling (PLL)}~\cite{akiba2013fast}. 
Given a graph $G$ and a vertex order $v_1,v_2,\dots, v_n$, PLL runs BFSs (with pruning, introduced below) 
from vertices in order and construct the labels. 
PLL starts with an empty index $L_0$, where $L_0(v)=\emptyset$ for every $v\in V$. 
In round $i$, PLL conducts a BFS from vertex $v_i$, and adds distances from $v_i$ to labels of reached vertices, 
that is, $L_i(u) = L_{i-1}(u)\cup \{(v_i,\delta(v_i, u))\}$ for each $u$ that $v_i$ can reach. 
%Further more, PLL prunes the BFS to reduce the number of labels added and also guarantee the labels is a 2-hop cover. 
To minimize the size of the index, PLL \emph{prunes} unnecessary labels and searches during each BFS from $v_i$: 
%before adding $v_i$ as a label of any vertex $u$, PLL checks if the 
when $v_i$ visits $u$, PLL checks if the 
existing index can already report the distance between $v_i$ and $u$. 
If so, the BFS will skip $u$, and $v_i$ will not be added to the labels of $u$. 
%otherwise, the distance pair $(v_i,\delta(v_i,u))$ will be added to the label of $u$ and $u$ will be expanede righ away. 
%The PLL is proved to be able to correctly answer the shortest distance between any vertex pairs in Theorem 4.1 in \cite{}, and shown the index constructed is minimal in Theorem 4.2 in \cite{}. 
PLL is proved to be a correct 2-hop cover, and the index constructed is minimal~\cite{akiba2013fast}.

To further speed up both the preprocessing and querying, PLL~\cite{akiba2013fast} applies \ccbfs{}. 
Instead of running pruned BFS, in the first several rounds, \ccbfs{} is used (without any pruning),
such that all sources in the clusters will be added to the labels of all vertices in $V$. 
The intuition is that, 
in the first several rounds, PLL can hardly prune any vertices since the index size is small.
Therefore, we can ignore the pruning but use clusters to improve the performance. 
%so that we can use \batchBFS for the first several rounds to take the advantage of compaction and parallelism brought by \batch{s}. 
As a result, the entire PLL algorithm in \cite{akiba2013fast} has two phases: 1) several rounds of \ccbfs{s} %, which is similar to the ADO based on landmark labeling mentioned in \cref{sec:approxLL}, 
and 2) pruned BFSs on the rest of the vertices. 
%Although PLL has been widely studied and used in practice, we are unaware of any parallel version of it. \guy{is this true?} \letong{\cite{li2019scaling} is a parallel PLL work}
%The long construction time in the sequential setting severely limits the graph size that PLL is applicable to. 
%In our experiments, the sequential algorithm fails to construct the index within 3 hours on a graph with 11M vertices. 
In our paper, we develop a parallel version for both \ccbfs{} and pruned BFS to the sequential algorithm in \cite{akiba2013fast}. 

\myparagraph{Our Implementation.} We apply our parallel \ccbfs{} to PLL, 
and also provide a parallel implementation for the pruned BFS. 
%We parallel both \batchBFS and pruned BFSs. 
We directly apply our parallel \ccbfs{} mentioned in \cref{sec:parallel_batchBFS} with the optimizations mentioned in \cref{sec:applications} to the first phase. 
Note that in the original PLL algorithm, 
the second phase incurs running pruned BFS from almost all vertices one by one. 
It is essential to parallelize this part to achieve high performance of the entire process of PLL. 
Ideally, we want to 1) run BFSs from multiple sources in parallel, 
but also 2) make them see as much of the index constructed by other vertices to enable effective pruning. 
To do this, we use a prefix-doubling-like scheme~\cite{blelloch2016parallelism,blelloch2020randomized,SGBFG15}, 
which splits the vertices into batches of exponentially growing sizes, 
and we parallelize the BFSs within each batch. 
The batches will be executed one by one from the smallest one. 
We empirically set the size of the $i$-th batch as $200\times (1.5)^i$, and stop increasing the batch size when the batch size is large enough (1000 in our implementation) for sufficient parallelism.
With both phases well-parallelized, our parallel version improves the preprocessing time of the original sequential code from~\cite{akiba2013fast} by up to 36.5$\times$,
and it can process much larger graphs than the sequential algorithm.

\subsection{Experiments on Exact 2-Hop Distance Oracle} 
\label{sec:exp:edo}

%\begin{table*}[htbp]
%  \centering
%  \small
%  \input{figs_algs/table_pll.tex}
%  \label{table:pll}
%  \caption{
%    .
%  }
%\end{table*}
% &  & \multicolumn{2}{c|}{avg. \# normal labels} &    \multicolumn{1}{c|}{Index} & \multicolumn{2}{c|}{CCBFS time} & \multicolumn{2}{c|}{Pruned BFS time} & Total Index & Total Query \\ 

% Table generated by Excel2LaTeX from sheet 'PLL final'
% Table generated by Excel2LaTeX from sheet 'PLL final'
% Table generated by Excel2LaTeX from sheet 'PLL final'
\begin{table}[htbp]
  \centering
  \footnotesize

  \begin{tabular}{@{}l@{}r@{ }r@{}r@{}r|@{}r@{  }@{  }r@{  }@{  }r}
    %\toprule
          &       &       & \multicolumn{1}{r}{\textbf{avg.}} & \multicolumn{1}{c|}{\textbf{Index}} & \multicolumn{3}{c}{\textbf{Running Time}} \\
          & \multicolumn{1}{c}{\textbf{$r$}} &   \textbf{Alg.}    & \multicolumn{1}{r}{\textbf{labs}} & \multicolumn{1}{c|}{\textbf{Size}} & \textbf{C-BFS} & \textbf{P-BFS} & \textbf{Total} \\
          \midrule
    % \multicolumn{1}{@{}l}{\multirow{3}[2]{*}{\textbf{Wiki}}} & \multicolumn{1}{c}{\multirow{3}[2]{*}{16}} & AYY & 33.9 & 0.98  & 4.88  & 48.4 & 54.2 \\
    % \multicolumn{1}{@{}c}{} & \multicolumn{1}{c}{} & Ours  & 34.9 & 1.00  & 0.27  & 8.20  & 8.57 \\
    % \multicolumn{1}{@{}c}{} & \multicolumn{1}{c}{} & Spd &  &  & 18.2$\times$ & 5.89$\times$ & 6.32$\times$ \\
    % \midrule
    \multicolumn{1}{@{}l}{\multirow{3}[2]{*}{\textbf{SK}}} & \multicolumn{1}{c}{\multirow{3}[2]{*}{64}} & AIY & 123 & 2.68  & 32.9 & 265 & 299 \\
    \multicolumn{1}{@{}c}{} & \multicolumn{1}{c}{} & Ours  & 126 & 2.71  & 1.05  & 15.0 & 16.2 \\
    \multicolumn{1}{@{}c}{} & \multicolumn{1}{c}{} & Spd &  &  & 31.3$\times$ & 17.7$\times$ & 18.5$\times$ \\
    % \midrule
    % \multicolumn{1}{@{}l}{\multirow{3}[2]{*}{\textbf{FL}}} & \multicolumn{1}{c}{\multirow{3}[2]{*}{64}} & AYY & 268 & 3.88  & 40.5 & 602 & 644 \\
    % \multicolumn{1}{@{}c}{} & \multicolumn{1}{c}{} & Ours  & 270 & 3.90  & 1.02  & 24.0 & 25.2 \\
    % \multicolumn{1}{@{}c}{} & \multicolumn{1}{c}{} & Spd &  &  & 39.8$\times$ & 25.1$\times$ & 25.6$\times$ \\
    \midrule
    \multicolumn{1}{@{}l}{\multirow{3}[2]{*}{\textbf{HW}}} & \multicolumn{1}{c}{\multirow{3}[2]{*}{64}} & AIY & 2237 & 12.2 & 103 & 11010 & 11115 \\
    \multicolumn{1}{@{}c}{} & \multicolumn{1}{c}{} & Ours  & 2280 & 12.4 & 1.26  & 303 & 304 \\
    \multicolumn{1}{@{}c}{} & \multicolumn{1}{c}{} & Spd &  &  & 82.3$\times$ & 36.4$\times$ & 36.5$\times$ \\
    \midrule
    \multicolumn{1}{@{}l}{\multirow{3}[2]{*}{\textbf{INDO}}} & \multicolumn{1}{@{}c@{}}{\multirow{3}[2]{*}{64}} & AIY & 323 & 18.7 & 246 & 3740 & 4051 \\
    \multicolumn{1}{@{}c}{} & \multicolumn{1}{c}{} & Ours  & 349 & 19.6 & 5.63  & 421 & 428 \\
    \multicolumn{1}{@{}c}{} & \multicolumn{1}{c}{} & Spd &  &  & 43.7$\times$ & 9.02$\times$ & 9.46$\times$ \\
    \midrule
    \midrule
    \multicolumn{1}{@{}l}{\textbf{EU}} & \multicolumn{1}{@{}c@{}}{64} & Ours  & 944 & 61.0 & 9.92  & 1385 & 1396 \\
    \multicolumn{1}{@{}l}{\textbf{LJ}} & \multicolumn{1}{@{}c@{}}{512} & Ours  & 2585 & 97.7 & 23.0 & 2718 & 2742 \\
    \multicolumn{1}{@{}l}{\textbf{AR}} & \multicolumn{1}{@{}c@{}}{256} & Ours  & 989 & 197 & 72.7 & 7690 & 7767 \\
    \multicolumn{1}{@{}l}{\textbf{OK}} &
    \multicolumn{1}{@{}c@{}}{2048} & Ours & 6881 & 198 & 119 & 13407 & 13527 \\
    \bottomrule
    \end{tabular}%
      \caption{\textbf{Performance on an exact distance oracle based on pruned landmark labeling.} 
      ``AIY'': the sequential implementation from ~\cite{akiba2013fast}. 
      $r$: the number of clusters used in \ccbfs{}. 
      Index sizes are in GB. 
      ``\ccbfs'': time for \batchBFS{}. ``P-BFS'': time for pruned BFS. ``Spd'': speed-up of ours over AIY. For \#labels/vertex, index size, and running time, lower is better. See more details in \cref{sec:2hopLL}.}
  \label{table:pll}%
\end{table}%

We also apply our \ccbfs{} to an exact distance oracle using the pruned landmark labeling (PLL) described in \cref{sec:2hopLL}. 
We follow the high-level idea in \cite{akiba2013fast}, which consists of a \batchBFS{} (\emp{\ccbfs{}}) phase on $r$ clusters and
a pruned BFS (\emp{P-BFS}) phase on the rest of vertices in $V$. 
We compare our parallel implementation with the original sequential code provided in \cite{akiba2013fast} (referred to as the \emph{AIY algorithm}).
Note that due to the need to report the \emph{exact} distance, the index size is much larger than the ADOs reported in \cref{sec:exp_appx}.
For the baseline algorithm (AIY), the largest graph it can process is INDO with 7.41M vertices and 301M edges.
Because of better parallelism, our \ccbfs{} can scale to much larger graphs.
We show four graphs (EU, LJ, AR, OK) that can be processed by our parallel algorithm but not the sequential version.
The largest graph includes 22.7M vertices and 1.11B edges. 

To choose the parameter $r$ (number of cluster searches), for all graphs that have been tested in \cite{akiba2013fast}, we use the same
value $r$ as they reported giving the almost best memory usage. 
For other graphs, we test a wide range of $r$ and present the overall best performance considering both space and preprocessing time.
We present the parameter $r$ and running time for each graph in \cref{table:pll}.

As mentioned, our algorithm may result in more labels (thus larger sizes) over the original AIY algorithm,
because of running BFS in batches in the pruned BFS phase: the vertices in the same batch
may not be able to see and use each others' labels for pruning, and thus more labels may be added.
In our results, such a loss is reasonably small. On all tested graphs, it is at most 8\% of the number of labels and at most 5\% more of the total index size.

For the running time, the time on P-BFS always dominates the cost, both in sequential and in parallel.
Our algorithm parallelizes both steps well, with better speedup on the \ccbfs{} phase.
Note that although \ccbfs{} is not the major cost of the sequential PLL, 
without parallelizing, it will make its cost comparable to or even larger than parallel P-BFS.
Therefore, it is important to combine \ccbfs{} with parallelism and make this part negligible in the parallel running time. 
In total, on the five small graphs, our algorithm achieves 6.3--36.5$\times$ speedup over the sequential AIY algorithm.
The advantage of our algorithm is more significant with more expensive sequential running time. 

On four larger graphs, our algorithm generates the index in 4 hours, and scales to graph AR with up to 22.7M vertices and 1.11B edges. 
Using a similar amount of time, the sequential AIY code can only process a much
smaller graph (HW) with 1.07M vertices and 112M edges, about one order of magnitude smaller.

\clearpage 
}

\end{document}
\endinput
%%
%% End of file `sample-sigconf.tex'.